\newtheorem{open}{Open Problem}
\newcommand{\NP}{{\sf NP}}
\newcommand{\problem}[3]{
	\vspace{2mm}
	\noindent\fbox{
		\begin{minipage}{0.96\textwidth}
			\begin{tabular*}{\textwidth}{@{\extracolsep{\fill}}lr} #1 & \\ \end{tabular*}
			{\textbf{Input:}} #2 \\
			{\textbf{Question:}} #3
		\end{minipage}
	\vspace{2mm}
	}
}
\xpatchcmd\thmt@restatable{%
\csname #2\@xa\endcsname\ifx\@nx#1\@nx\else[{#1}]\fi
}{%
\ifthmt@thisistheone
\csname #2\@xa\endcsname\ifx\@nx#1\@nx\else[{#1}]\fi
\else
\csname #2\@xa\endcsname[{Restated}]
\fi}{}{}
\title{{E}dge {M}ultiway {C}ut and {N}ode {M}ultiway {C}ut Are Hard for Planar Subcubic Graphs\footnote{An extended abstract of this paper appeared in the proceedings of SWAT 2024~\cite{MPPSL24}.}} 
\titlerunning{EMWC and NMWC are hard for planar subcubic graphs}
\author{Matthew Johnson}{Durham University, Durham, United Kingdom}{matthew.johnson2@durham.ac.uk}{}{}
\author{Barnaby Martin}{Durham University, Durham, United Kingdom}{barnaby.d.martin@durham.ac.uk}{}{}
\author{Sukanya Pandey}{Utrecht University, Utrecht, The Netherlands}{s.pandey1@uu.nl}{0000-0001-5728-1120}{}
\author{Dani\"el Paulusma}{Durham University, Durham United Kingdom}{daniel.paulusma@durham.ac.uk}{0000-0001-5945-9287}{}
\author{Siani Smith}{University of Bristol and Heilbronn Institute for Mathematical Research, Bristol, United Kingdom}{siani.smith@bristol.ac.uk}{}{}
\author{Erik Jan van Leeuwen}{Utrecht University, Utrecht, The Netherlands}{e.j.vanleeuwen@uu.nl}{0000-0001-5240-7257}{}
\authorrunning{M. Johnson et al.}
\keywords{multiway cut; planar subcubic graph; complexity dichotomy; graph containment}
\begin{document}
\maketitle

\begin{abstract}
It is known that the weighted version of {\sc Edge Multiway Cut} (also known as {\sc Multiterminal Cut}) is \NP-complete on planar graphs of maximum degree~$3$. In contrast, for the unweighted version, \NP-completeness is only known for planar graphs of maximum degree~$11$. In fact, the complexity of unweighted {\sc Edge Multiway Cut} was open for graphs of maximum degree~$3$ for over twenty years. We prove that the unweighted version is \NP-complete even for planar graphs of maximum degree~$3$. As  weighted {\sc Edge Multiway Cut} is polynomial-time solvable for graphs of maximum degree at most~$2$,  we have now closed the complexity gap.  We also prove that (unweighted) {\sc Node Multiway Cut}  (both with and without deletable terminals) is \NP-complete for planar graphs of maximum degree~$3$. By combining our results with known results, we can apply two meta-classifications on graph containment from the literature. This yields full dichotomies for all three problems on ${\cal H}$-topological-minor-free graphs and, should ${\cal H}$ be finite, on ${\cal H}$-subgraph-free graphs as well. 
Previously, such dichotomies were only implied for ${\cal H}$-minor-free graphs.
\end{abstract}

\section{Introduction}

In this paper we consider the unweighted edge and node versions of the classic {\sc Multiway Cut} problem, which is one of the most central separation/clustering graph problems with applications in, for example, computer vision~\cite{BirchfieldT99,BoykovVZ98} and multi-processor scheduling~\cite{Stone77}.

To define these problems, let $G=(V,E)$ be a graph. For a subset $S$ of either vertices or edges of $G$, let $G-S$ denote the graph obtained from $G$ after deleting all elements, either vertices (and incident edges) or edges, of $S$.
Now, let $T\subseteq V$ be a set of specified vertices that are called the {\it terminals} of $G$. A set $S\subseteq E$ is an {\it edge multiway cut} for $(G,T)$ if every connected component of $G-S$ contains at most one vertex of~$T$. In order words, removing $S$ pairwise disconnects the terminals of~$T$; see Figure~\ref{fig:examples} for an example.
We define the notion of a {\it node multiway cut} $S\subseteq V$ in the same way, but there are two versions depending on whether or not
 $S$ can contain vertices of~$T$; see again Figure~\ref{fig:examples}.
 This leads to the following three decision problems, 
where the second one is also known as {\sc Unrestricted Node Multiway Cut} and the third one as {\sc Restricted Node Multiway Cut} or {\sc Node Multiway Cut with Undeletable Terminals}. 
\begin{figure}[t]
	\centering
\begin{minipage}{0.45\textwidth}
		\raggedright
		\includegraphics[scale=0.7]{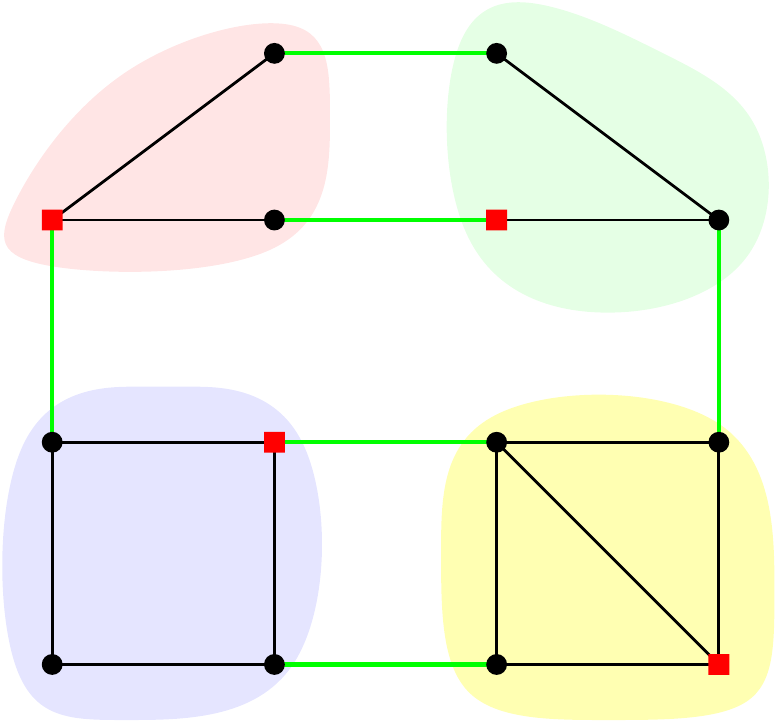}	
\end{minipage}
\begin{minipage}{0.45\textwidth}
	\raggedleft
	\includegraphics[scale=0.7]{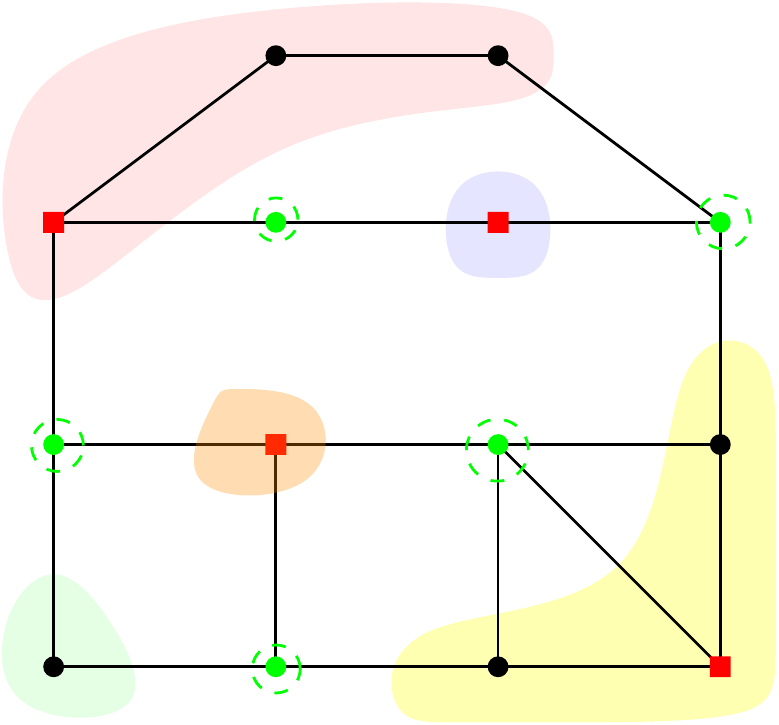}
\end{minipage}
\centering
\includegraphics[scale=0.7]{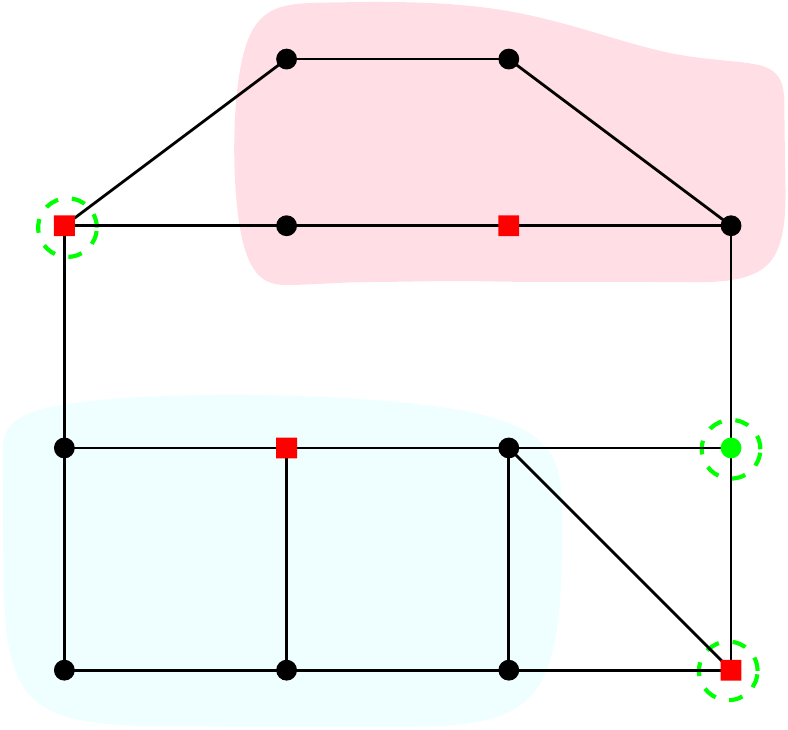}
\caption{The three different types of multiway cuts that we consider in our paper. In all figures, the red square nodes form the terminal set~$T$. In the top left figure, the green lines form an edge multiway cut. In the top right, the green encircled vertices form a node multiway cut not containing a vertex of $T$. In the bottom figure, the green encircled vertices form a node multiway cut that contains two vertices of $T$. The coloured parts depict the components formed after removing the edges/vertices of the multiway cut.}
\label{fig:examples}
\end{figure}

\medskip
\problem{\sc Edge Multiway Cut}{A graph $G$, a set of terminals $T\subseteq V$ and an integer $k$.}{Does $(G,T)$ have an edge multiway cut $S\subseteq E$ of size at most $k$?}

\medskip
\problem{\sc Node Multiway Cut with Deletable Terminals}{A graph $G$, a set of terminals $T\subseteq V$ and an integer $k$.}{Does $(G,T)$ have a node multiway cut $S\subseteq V$ of size at most $k$?}

\medskip
\problem{\sc Node Multiway Cut}{A graph $G$, a set of terminals $T\subseteq V$ and an integer $k$.}{Does $(G,T)$ have a node multiway cut $S\subseteq V\setminus T$ of size at most $k$?}

\medskip
\noindent
In {\sc Weighted Edge Multiway Cut}, we are given a function $\omega: E(G)\rightarrow \mathbb{Q}^+$. The goal is to decide if $(G,T)$ admits an edge multiway cut of total weight at most $k$. If $\omega\equiv 1$, then we obtain {\sc Edge Multiway Cut}. 
Similarly, we can define weighted variants of both versions of {\sc Node Multiway Cut} with respect to a node weight function $\omega: V(G)\rightarrow \mathbb{Q}^+$.

The above problems have been studied extensively; see, for example,~\cite{BergougnouxPT22, CKR00, CCF14, CLL09, CHM13, CPPW13, GalbyMSST22, GargVY04, Ha98, KM12, Ma12, PV22,PapadopoulosT20}. The problems can be thought of as the natural dual problems of the {\sc Steiner Tree} problem. 
In their famous study of {\sc Edge Multiway Cut}, Dahlhaus et al.~\cite{DJPSY94} showed that it is \NP-complete even if the set of terminals has size $|T|=3$. Garg et al.~\cite{GargVY04} showed the same for {\sc Node Multiway Cut}. 
We note that this is a tight result: if $|T|=2$, then both problems reduce to the {\sc Minimum Cut} problem. The latter problem can be modelled as a maximum flow problem, and hence is well known to be solvable in polynomial time~\cite{FF56}.
Note that {\sc Node Multiway Cut with Deletable Terminals} is trivially polynomial-time solvable for any fixed $|T|$.

\subparagraph*{Our Focus}
 A graph is {\it subcubic} if it has maximum degree at most~$3$. Our goal in this paper is to answer the following question: 

\medskip
\noindent
{\it What is the computational complexity of {\sc Edge Multiway Cut} and both versions of {\sc Node Multiway Cut} for planar subcubic graphs?} 

\subparagraph{Motivation} 
Our first reason for considering the above research question is due to a complexity gap that was left open in the literature for over twenty years.  
That is, 
in addition to their \NP-completeness result for $|T|=3$,
Dahlhaus et al.~\cite{DJPSY94} also proved that {\sc Weighted Edge Multiway Cut} is \NP-complete on planar subcubic graphs using integral edge weights. 
Any edge of integer weight $j$ can be replaced by $j$ parallel edges (and vice versa) without changing the problem. Hence, their reduction implies that {\sc Edge Multiway Cut} is \NP-complete on planar graphs of maximum degree at most~$11$~ \cite[Theorem~2b]{DJPSY94}. 
Dahlhaus et al.~\cite{DJPSY94} write that 

\medskip
\noindent
``{\it The degree bound of~$11$ is not the best possible. Using a slight variant on the construction and considerably more complicated arguments, we believe it can be reduced at least to~$6$}'', 

\medskip
\noindent
but no further arguments were given. 
Even without the planarity condition and only focussing on the maximum degree bound, the hardness result of 
Dahlhaus et al.~\cite{DJPSY94} is still best known. 
Given that the problem is polynomial-time solvable if the maximum degree is~$2$, this means that there is a significant complexity gap that has yet to be addressed.

To the best of our knowledge, there is no explicit hardness result in the literature that proves \NP-completeness of either version of {\sc Node Multiway Cut} on graphs of any fixed degree or on planar graphs. 
However, known and straightforward reductions (see e.g.~\cite{GargVY04,PapadopoulosT20}) immediately yield \NP-hardness on planar subcubic graphs for {\sc Node Multiway Cut with Deletable Terminals} (see Theorem~\ref{thm:NMwCDT:C2}), but only on planar graphs of maximum degree~$4$ for {\sc Node Multiway Cut} (see Proposition~\ref{prp:NMWC4}). 

Our second reason is the central role planar subcubic graphs play in complexity dichotomies of graph problems restricted to graphs that do not contain any graph from a set~${\cal H}$ as a topological minor\footnote{A graph $G$ contains a graph $H$ as a {\it topological minor} if $G$ can be modified into $H$ by a sequence of edge deletions, vertex deletions and vertex dissolutions, where a vertex dissolution is the contraction of an edge incident to a vertex of degree~$2$ whose (two) neighbours are non-adjacent.}
 or subgraph; such graphs are said to be {\it ${\cal H}$-topological-minor-free} or {\it ${\cal H}$-subgraph-free}, respectively.
 For both the topological minor containment relation~\cite{RS86} and the subgraph relation (see~\cite{JMOPPSV}) meta-classifications exist.
 To apply these meta-classifications, a problem must satisfy certain conditions, in particular being \NP-complete for subcubic planar graphs for the topological minor relation, and being \NP-complete for subcubic graphs for the subgraph relation. These two conditions are {\it exactly what is left to prove}  
for {\sc Edge Multiway Cut} and both versions of {\sc Node Multiway Cut}.
In contrast, the results of~\cite{ALS91,DJPSY94,RS86} and the aforementioned reductions from~\cite{GargVY04,PapadopoulosT20}  imply that all three problems are fully classified for ${\cal H}$-minor-free graphs: the problems are polynomial-time solvable if ${\cal H}$ contains a planar graph and \NP-complete otherwise (see also~\cite{JMOPPSV}).
Hence, determining the complexity status of our three problems for planar subcubic graphs is a pressing issue 
for obtaining new complexity classifications for ${\cal H}$-topological-minor-free graphs and ${\cal H}$-subgraph-free-graphs.

Our third reason is the rich tradition to investigate the \NP-completeness of problems on subcubic graphs and planar subcubic  graphs (see e.g.~the list given by 
Johnson et al.~\cite{JMOPPSV}) which continues till this day, as evidenced by recent \NP-completeness results for subcubic graphs (e.g.~\cite{BBJKLMOPS23,MP22}) and planar subcubic graphs (e.g.~\cite{BCD23,ZZ20}).
We also note 
that {\sc Edge Multicut}, the standard generalization of {\sc Edge Multiway Cut} where given pairs of terminals must be disconnected by an edge cut', is \NP-complete even on subcubic trees~\cite{CF01}. 

For the above reasons,
resolving the complexity status of our three problems restricted to (planar) subcubic graphs is long overdue.

\subsection{Our Results} 

The following three results fully answer our research question. 

\begin{restatable}{theorem}{EMWCthm}\label{thm:MwC:C2}
	{\sc Edge Multiway Cut} is \NP-complete for planar subcubic graphs.
\end{restatable}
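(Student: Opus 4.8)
The plan is to reduce from {\sc Weighted Edge Multiway Cut} on planar subcubic graphs, shown \NP-complete by Dahlhaus et al.~\cite{DJPSY94} as recalled above. Membership in \NP\ is immediate: an edge set $S$ with $|S|\le k$ is a polynomial-size certificate, and one verifies in polynomial time that every component of $G-S$ meets $T$ at most once. For hardness, start from a planar subcubic instance $(G,T,\omega,k)$. As already noted, replacing each edge of weight $w$ by $w$ parallel edges gives an equivalent \emph{planar} instance of maximum degree at most $11$, so the only remaining obstruction to subcubicity is the vertices of degree $4,\dots,11$. Moreover, possibly after modifying the source construction, we may assume every terminal has degree at most $3$ already (equivalently, is incident only to unit-weight edges), so that all high-degree vertices are non-terminals.

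The core step is to reduce these degrees to $3$ by planar gadgets. I would replace each non-terminal $v$ of degree $d\ge 4$ by a path $v_1-v_2-\cdots-v_{d-2}$, a \emph{spine}, distributing the $d$ edges formerly at $v$ along it — two at each end vertex, one at each internal vertex — in the cyclic order in which they appear around $v$ in a fixed planar embedding of $G$. Doing this at every high-degree vertex keeps the drawing planar and makes every spine-internal vertex have degree exactly $3$, so the resulting graph $G'$ is planar and subcubic, and we may take $k'=k$. One direction of the equivalence is then routine: a weighted edge multiway cut of $(G,T,\omega)$ of weight at most $k$ becomes an edge multiway cut of $(G',T)$ of size at most $k$ by cutting, for each cut edge of weight $w$, all $w$ of its parallel copies and no spine edge.

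The converse — that $G'$ admits no edge multiway cut smaller than the weighted optimum of $G$ — is the crux, and precisely the point that kept the known degree bound at $11$ rather than $3$. The danger is that a spine offers cheap ``transversal'' cuts with no counterpart in $(G,T,\omega)$: cutting a single spine edge of $v$ separates the neighbours reached from one side of the cut from those reached from the other side at cost $1$, whereas effecting the same separation at the single vertex $v$ in $G$ may cost much more. One must therefore show that every such transversal cut is never profitable, because the two sides of the spine carry along enough incident terminals (or terminal-bearing branches) that some component still contains two terminals, forcing strictly more cuts elsewhere. Making this watertight is where the real work lies: it forces one to engineer the structure around each heavy edge of the source instance so that every transversal cut is ``punished'' with the right parity and connectivity, and then to run a global exchange argument rewriting any size-$k$ edge multiway cut of $G'$ into one that cuts the parallel copies of each original edge all-or-nothing and never uses a spine edge, from which a weighted multiway cut of $G$ of weight at most $k$ is read off. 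Everything else is bookkeeping.
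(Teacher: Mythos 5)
There is a genuine gap, and it sits exactly where you park it as ``possibly after modifying the source construction'' and ``where the real work lies''. Your first unjustified step is the assumption that the weighted planar subcubic instance of Dahlhaus et al.\ can be taken to have all terminals incident only to unit-weight edges, so that after the parallel-edge replacement every high-degree vertex is a non-terminal. In their construction terminals are incident to edges of total weight up to~$6$, and no local replacement can repair this: a terminal must remain a single vertex of the final graph (replacing it by a gadget leaves no vertex that can play the role of that terminal), so its degree cannot be reduced locally. This is precisely the bottleneck that kept the known bound at maximum degree~$11$; it cannot be assumed away, and overcoming it is the bulk of the actual proof. The paper does not reduce from the weighted problem at all: it builds a new reduction from {\sc Planar 2P1N-3SAT} with fresh variable and clause gadgets, arranged so that each terminal is incident with exactly one edge of weight~$3$, yielding an intermediate $\{1,2,3,6\}$-weighted instance of maximum degree~$5$, and its correctness requires a long exchange argument about triangle bases, connector-edges and link-structures. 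Nothing in your proposal supplies or substitutes for that construction.

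Your second gap is the spine gadget itself. Cutting a single spine edge costs~$1$ and separates the former neighbourhood of $v$ into two sides; in a weighted instance this separation may have no counterpart of comparable cost, so with $k'=k$ the backward direction can simply fail. You recognise this, but the proposed remedy --- ``engineer the structure around each heavy edge'' so that transversal cuts are never profitable, then run a global exchange argument --- is exactly the proof that is missing, and carrying it out would again mean redesigning the source reduction rather than doing bookkeeping. The paper sidesteps the issue quantitatively: each high-degree non-terminal is replaced by a $1000\times 1000$ honeycomb grid whose attachment points lie $100$ cells apart along the boundary, so any multiway cut using gadget edges must use at least $50$ of them, whereas cutting all (at most eight) attaching edges is always cheaper; a planar-duality argument then shows no minimum edge multiway cut contains any honeycomb edge. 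Your spine has no analogous cost gap, so both the terminal-degree step and the degree-reduction step remain open in your argument.
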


\begin{restatable}{theorem}{NMWCthm}\label{thm:NMwC:C2}
	{\sc Node Multiway Cut} is \NP-complete for planar subcubic graphs.
\end{restatable}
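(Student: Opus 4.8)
The plan is to reduce from {\sc Edge Multiway Cut} on planar subcubic graphs, which is \NP-complete by Theorem~\ref{thm:MwC:C2}; since {\sc Node Multiway Cut} is clearly in \NP, it suffices to describe a polynomial reduction. The obvious first attempt, given an instance $(G,T,k)$, is to subdivide every edge $e$ of $G$ with a new vertex $m_e$, keep the terminals and the budget, and note that deleting $m_e$ should play the role of cutting~$e$. This does not quite work: a non-terminal vertex $v$ of degree~$3$, if deleted, pairwise separates its three incident half-edges for the price of a single deletion, whereas achieving the same effect in $G$ requires cutting two edges at~$v$ (this is exactly what goes wrong for the star $K_{1,3}$ with a non-terminal centre).

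The fix is a \emph{triangle gadget}. In the subdivided graph, replace every non-terminal vertex $v$ of degree~$3$, with incident edges $e_1,e_2,e_3$, by a triangle on three new vertices $p_1^v,p_2^v,p_3^v$, and make each $p_i^v$ adjacent to $m_{e_i}$; terminals and non-terminal vertices of degree at most~$2$ are left in place (their incident edges are still subdivided). Call the result $G'$, and set $T'=T$ and $k'=k$. Then $G'$ is planar and subcubic: each $m_e$ has degree~$2$, each $p_i^v$ has degree~$3$, and every retained vertex keeps its degree. The key property of a triangle is that deleting any set of at most two of its vertices never disconnects two of the surviving vertices; hence the only way a node cut can ``isolate direction~$i$'' at~$v$ is to delete $p_i^v$ itself, giving a one-to-one correspondence between deletions inside the gadget and cut edges at~$v$, so the gadget faithfully simulates~$v$.

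For correctness: given an edge multiway cut $S$ of $(G,T)$ with $|S|\le k$, delete $\{m_e : e\in S\}$ in $G'$; any path in $G'$ between two terminals projects to a walk between them in $G-S$ (each subdivision vertex on the path contributes the corresponding edge, and each maximal portion inside a triangle $T_v$ contributes nothing but ``passing through $v$''), so this set is a node multiway cut of size at most~$k$. Conversely, given a node multiway cut $S'$ of $(G',T')$ with $|S'|\le k$, first replace each deleted non-terminal vertex of $G$ of degree at most~$2$ by one of its (at most two) incident subdivision vertices, which does not increase the size and keeps it a node multiway cut; now $S'$ contains only subdivision vertices and triangle vertices. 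Letting $S$ be the image of $S'$ under $m_e\mapsto e$ and $p_i^v\mapsto e_i$ gives $|S|\le|S'|\le k$, and any path in $G-S$ between two terminals lifts to a path in $G'-S'$ (whenever it passes through a gadget $T_v$ via directions $i$ and~$j$, neither $p_i^v$ nor $p_j^v$ is deleted, and a triangle minus its third vertex still connects them), so $S$ is an edge multiway cut. Hence $(G,T,k)$ is a yes-instance if and only if $(G',T',k)$ is.

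The hard part is pinpointing the gadget and, in particular, seeing that degree~$3$ is exactly the regime in which it exists: for a non-terminal vertex of degree~$d\ge 4$ the analogous $d$-cycle gadget fails, because deleting two ``opposite'' ports of the cycle separates the two remaining ports, a separation that corresponds to no way of cutting edges at a degree-$d$ vertex. This is consistent with {\sc Node Multiway Cut} already being known to be \NP-hard on planar graphs of maximum degree~$4$ by a simpler route (Proposition~\ref{prp:NMWC4}). The only other technical point is the clean-up step for deletions of low-degree non-terminal vertices in the backward direction.
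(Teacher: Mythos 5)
Your reduction is correct, and it is in essence the same idea as the paper's, just packaged differently. The paper reduces from {\sc Edge Multiway Cut} on planar subcubic graphs by taking the \emph{line graph of the $2$-subdivision} of $G$: there, every original vertex becomes a clique of size equal to its degree (a triangle for degree~$3$), every edge becomes a short path of ``edge-vertices'', and for each terminal one vertex of its clique is designated as the new terminal; planarity and subcubicity then follow from known facts about line graphs of planar graphs of maximum degree~$3$, and the cut-correspondence argument needs a small normalisation step in both directions (pushing the cut off edges incident to terminals, resp.\ off terminal-clique vertices), using the $2$-subdivision and Lemma~\ref{lem:Mwc:C3}. Your construction is the hand-rolled version of the same mechanism: an edge-vertex $m_e$ per edge and a triangle gadget at each non-terminal degree-$3$ vertex is exactly what the line graph produces, while you keep terminals (and degree-$\le 2$ non-terminals) as single undeletable vertices instead of cliques. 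What your variant buys is that you avoid the clique-representative bookkeeping and the forward-direction normalisation entirely (your forward cut consists only of subdivision vertices, which are never terminals); the price is the explicit clean-up step in the backward direction for deleted retained low-degree non-terminals, which you state and which does work (any path through such a vertex must use one of its at most two neighbouring subdivision vertices). Your verification of both directions -- projecting paths in $G'$ to walks in $G-S$, and lifting paths in $G-S$ using that any two surviving triangle vertices remain adjacent -- is sound, and the planarity/subcubicity of $G'$ is immediate from the embedding of the triangle in the cyclic order of the edges at $v$. So the proof stands; the only cosmetic remark is that your gadget graph is precisely the line-graph construction in disguise, which you may find a cleaner way to present it.
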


\begin{restatable}{theorem}{NMWCDTthm}\label{thm:NMwCDT:C2}
	{\sc Node Multiway Cut with Deletable Terminals} is \NP-complete for planar subcubic graphs.
\end{restatable}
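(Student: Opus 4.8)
The plan is to give a simple reduction, for which there are two natural routes. The containment {\sc Node Multiway Cut} $\le_p$ {\sc Node Multiway Cut with Deletable Terminals} holds trivially, since any solution avoiding the terminals is in particular an unrestricted solution; so it suffices to engineer a reverse containment on the class of instances witnessing Theorem~\ref{thm:NMwC:C2}. The key observation for this is that deleting a \emph{pendant} terminal is never advantageous: if $t$ is a terminal of degree one with neighbour $w$, then for any node multiway cut $S$ with deletable terminals containing $t$, the set $(S\setminus\{t\})\cup\{w\}$ is again a node multiway cut (now $t$ sits alone in its component, and every other component is contained in a component of $G-S$), of no larger size and with one fewer deleted terminal. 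Iterating, any solution with deletable terminals turns into one avoiding $T$ of the same size; hence, on any planar subcubic instance all of whose terminals have degree one, the two problems coincide. If the reduction behind Theorem~\ref{thm:NMwC:C2} is set up to produce such instances --- or is massaged to do so, e.g.\ by attaching a pendant terminal in place of each terminal --- this already proves the theorem.

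A route independent of the precise form of earlier constructions is to reduce from {\sc Edge Multiway Cut} on planar subcubic graphs (Theorem~\ref{thm:MwC:C2}). Given $(G,T,k)$ with $G$ planar and subcubic, subdivide every edge $e=uv$ once, adding a midpoint $m_e$; the intent is that deleting $m_e$ simulates deleting $e$. Subdivision preserves planarity and the degree bound, and $\{m_e : e\in S\}$ is a node multiway cut of the subdivided graph whenever $S$ is an edge multiway cut of $(G,T)$. The obstacle in the converse direction is that a node multiway cut of the subdivided graph could delete an original degree-$3$ vertex $v$, severing all three incident midpoints at unit cost rather than the cost $3$ an edge cut must pay, and --- in the deletable-terminals regime --- could even delete a terminal. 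To neutralise this I would replace each original vertex $v$ by a small planar subcubic gadget with $\deg_G(v)$ attachment points that contains one fresh terminal and has the property that, in any minimum-size node multiway cut, a deletion of a gadget vertex other than a midpoint can be rerouted to midpoint deletions without increasing the total; the fresh terminal is what prevents a gadget from being collapsed cheaply. One then takes $T'=T$ together with the gadget terminals and $k'=k+c$, where $c$ is the number of deletions each gadget forces.

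The correctness proof then has two parts. Membership in \NP\ is immediate. For the routine direction, an edge multiway cut of $(G,T)$ of size at most $k$ is translated into a node multiway cut of size at most $k'$ by taking the corresponding midpoints together with the forced gadget deletions, and one checks the component structure. The crux is the converse: given a node multiway cut $S'$ of the constructed graph of size at most $k'$, normalise it, gadget by gadget, into a solution that uses exactly the forced gadget deletions and otherwise only midpoints, using local exchange arguments; the midpoints of the normalised $S'$ then form an edge multiway cut of $(G,T)$ of size at most $k$. I expect this normalisation --- in particular, arguing that deleting \emph{any} structural vertex, including a terminal, is never needed --- to be the main difficulty, since it is exactly what dictates the design of the gadget and of its planar embedding.
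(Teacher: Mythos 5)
Both of your routes stop short of a proof, and each has a concrete hole. For Route~1, the precondition you need (all terminals pendant) is not satisfied by the instances produced in the proof of Theorem~\ref{thm:NMwC:C2}: there the terminals of the line graph $L$ are vertices of terminal triangles and have degree~$3$ (they come from terminals of degree~$3$ in the edge-multiway-cut instance, where each terminal is attached by a weight-$3$ edge that becomes three parallel edges). Your suggested massaging --- attaching a fresh pendant terminal in place of each terminal --- then raises the degree of a degree-$3$ vertex to~$4$, so you leave the class of planar \emph{subcubic} graphs; this is exactly the trade-off visible in Proposition~\ref{prp:NMWC4}, where the same pendant trick (used in the opposite direction) is the reason only maximum degree~$4$ is obtained. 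Your exchange argument ``replace a deleted terminal by its neighbour'' is correct for pendant terminals but fails for degree-$3$ terminals, so the pendant requirement cannot simply be dropped. For Route~2, the entire burden rests on a vertex gadget that is never constructed: you only postulate a planar subcubic gadget with a fresh terminal, forced deletions, and a normalisation property, and you yourself identify proving that property as the main difficulty. An unspecified gadget plus an unproven exchange lemma is not a reduction, and it is not clear such a gadget exists in the form described (the fresh terminals must themselves be separated from everything at bounded extra cost without ever making a non-midpoint deletion profitable).

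The missing idea is that deletable terminals make hardness \emph{easier}, not harder, to establish, so no machinery from Theorems~\ref{thm:MwC:C2} or~\ref{thm:NMwC:C2} is needed. The paper reduces from {\sc Vertex Cover} on planar subcubic graphs (\NP-complete): keep the graph $G$, set $T=V(G)$, and keep $k$. Since every edge joins two terminals, a set $S\subseteq V(G)$ is a node multiway cut for $(G,T)$ with deletable terminals if and only if $G-S$ has no edges, i.e.\ if and only if $S$ is a vertex cover; planarity and the degree bound are untouched. This two-line reduction replaces both of your routes.
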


\noindent
We prove Theorem~\ref{thm:MwC:C2} in Section~\ref{s-1}; Theorem~\ref{thm:NMwC:C2} in Section~\ref{s-2}; and Theorems~\ref{thm:NMwCDT:C2} in Section~\ref{s-3}. 

In spirit, our construction for {\sc Edge Multiway Cut} in Theorem~\ref{thm:MwC:C2} is similar to the one by Dahlhaus et al.~\cite{DJPSY94} for graphs of maximum degree~$11$. For non-terminal vertices of high degree, a local replacement by a (sub)cubic graph is relatively easy. However, for terminal vertices of high degree, a local replacement strategy seems impossible. Hence, the fact that terminals in the construction of Dahlhaus et al.~\cite{DJPSY94} can have degree up to~$6$ becomes a crucial bottleneck. 
To ensure that our constructed graph has maximum degree~$3$, we therefore need to build different gadgets. We then leverage several deep structural properties of the edge multiway cut in the resulting instance, making for a significantly more involved and technical correctness proof.
Crucially, we first prove \NP-completeness for a weighted version of the problem on graphs of maximum degree~$5$, in which 
each terminal is incident with exactly one edge of weight~$3$.
In the final step of our construction, we replace weighted edges and high-degree vertices with appropriate gadgets. 

The \NP-hardness for {\sc Node Multiway Cut} for planar subcubic graphs shown in Theorem~\ref{thm:NMwC:C2} follows from the \NP-hardness of {\sc Edge Multiway Cut} by constructing the line graph of input graph.

The \NP-hardness for {\sc Node Multiway Cut with Deletable Terminals} on planar subcubic graphs shown in Theorem~\ref{thm:NMwCDT:C2} follows from a straightforward reduction from {\sc Vertex Cover}.

\subsection{Consequences} 

As discussed above, we immediately have the following dichotomy.

\begin{corollary}
For every $\Delta \geq 1$, {\sc Edge Multiway Cut} and both versions of {\sc Node Multiway Cut} on graphs of maximum degree~$\Delta$ are polynomial-time solvable if $\Delta\leq 2$, and \NP-complete if $\Delta\geq 3$.
\end{corollary}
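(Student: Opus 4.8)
The plan is to derive the corollary directly from the three main theorems together with the standard polynomial-time algorithms for graphs of maximum degree at most~$2$, so essentially nothing new needs to be proved --- only assembled.

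First I would handle the polynomial-time side. If $\Delta \leq 2$, then every connected component of $G$ is either a path or a cycle. For \textsc{Edge Multiway Cut}, observe that on a path the optimal solution is obtained by deleting, between every pair of consecutive terminals along the path, one cheapest edge (or any one edge in the unweighted case); on a cycle one similarly processes the cyclic sequence of terminals, deleting one edge in each arc between consecutive terminals, with a small case analysis when there are zero or one terminals on the component. This is clearly computable in linear time, and the same componentwise argument works verbatim for both versions of \textsc{Node Multiway Cut}, deleting a cheapest internal vertex of each terminal-free-interior segment (for the undeletable variant) or any internal vertex (deletable variant, where one may even delete a terminal if that is cheaper). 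Hence all three problems, including their weighted variants, are polynomial-time solvable when $\Delta \leq 2$.

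Next I would invoke the hardness side. For $\Delta = 3$, Theorems~\ref{thm:MwC:C2}, \ref{thm:NMwC:C2}, and~\ref{thm:NMwCDT:C2} give \NP-completeness for \emph{planar} subcubic graphs, which in particular yields \NP-completeness for subcubic graphs. To extend this to every $\Delta \geq 3$, note that the class of graphs of maximum degree at most~$3$ is contained in the class of graphs of maximum degree at most~$\Delta$ for any $\Delta \geq 3$; since the three problems are already \NP-complete on the smaller class, and membership in \NP\ is immediate (a cut of size at most~$k$ is a polynomial-size certificate checkable in polynomial time), they remain \NP-complete on the larger class. This monotonicity argument requires no gadget and no modification of the instance.

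The only point needing a sentence of care --- and the closest thing to an obstacle --- is that the statement speaks of graphs of maximum degree \emph{exactly} bounded by~$\Delta$, so one must confirm that a subcubic instance is a legitimate instance for the degree-$\Delta$ problem and that its ``yes''/``no'' status is unchanged; this is trivial, since the problems are defined identically and only the ambient graph class changes. I would therefore write the corollary's proof as: the $\Delta \leq 2$ cases follow from the componentwise algorithm sketched above; the $\Delta \geq 3$ cases follow from Theorems~\ref{thm:MwC:C2}--\ref{thm:NMwCDT:C2} (which give the $\Delta = 3$ case even with the extra planarity restriction) together with the trivial inclusion of graph classes as $\Delta$ grows. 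No genuinely hard step arises.
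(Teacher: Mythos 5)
Your proposal is correct and matches the paper's intent: the paper states this corollary without a separate proof, treating it as an immediate consequence of Theorems~\ref{thm:MwC:C2}--\ref{thm:NMwCDT:C2} for $\Delta\geq 3$ and the (folklore) polynomial-time solvability on paths and cycles for $\Delta\leq 2$, which is exactly the assembly you carry out. Your componentwise algorithm sketch and the class-inclusion argument for $\Delta\geq 3$ are the standard details the paper leaves implicit, so there is no substantive difference in approach.
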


\noindent
From a result of~Robertson and Seymour~\cite{RS86}, it follows that any problem~$\Pi$ that is \NP-hard on subcubic planar graphs but polynomial-time solvable for graphs of bounded treewidth can be fully classified on ${\cal H}$-topological minor-free graphs. Namely, $\Pi$ is polynomial-time solvable if ${\cal H}$ contains a subcubic planar graph and \NP-hard otherwise.
It is known that {\sc Edge Multiway Cut} and both versions of {\sc Node Multiway Cut} satisfy the second property~\cite{ALS91}. As Theorems~\ref{thm:MwC:C2}--\ref{thm:NMwC:C2} show the first property, we obtain the following dichotomy.

\begin{corollary}
For every set of graphs $\mathcal{H}$, {\sc Edge Multiway Cut} and both versions of {\sc Node Multiway Cut} on $\mathcal{H}$-topological-minor-free graphs are polynomial-time solvable if $\mathcal{H}$ contains a planar subcubic graph, and \NP-complete otherwise.
\end{corollary}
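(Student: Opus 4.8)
The plan is to instantiate the meta-classification for the topological-minor relation. By the result of Robertson and Seymour~\cite{RS86} recalled above, once we know that {\sc Edge Multiway Cut} and both versions of {\sc Node Multiway Cut} are \NP-hard on planar subcubic graphs (Theorems~\ref{thm:MwC:C2}--\ref{thm:NMwCDT:C2}) and polynomial-time solvable on graphs of bounded treewidth~\cite{ALS91}, the dichotomy is immediate; nonetheless I would spell out both directions explicitly, since each relies on a short folklore observation.

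For the polynomial-time direction, I would argue as follows. Suppose $\mathcal{H}$ contains a planar subcubic graph $H$. Every $\mathcal{H}$-topological-minor-free graph is in particular $H$-topological-minor-free. Since every branch vertex of $H$ has degree at most~$3$, a graph contains $H$ as a topological minor if and only if it contains $H$ as a minor; hence the $\mathcal{H}$-topological-minor-free graphs form a subclass of the $H$-minor-free graphs. As $H$ is planar, Robertson and Seymour's theorem on excluding a planar graph~\cite{RS86} yields a constant $w=w(H)$ such that every $H$-minor-free graph has treewidth at most~$w$. By~\cite{ALS91}, all three problems are solvable in polynomial time on graphs of treewidth at most~$w$, and therefore on $\mathcal{H}$-topological-minor-free graphs.

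For the \NP-completeness direction, suppose $\mathcal{H}$ contains no planar subcubic graph. The key observation is that each of the three operations defining topological-minor containment --- edge deletion, vertex deletion, and vertex dissolution --- preserves planarity and preserves the property of being subcubic; for a dissolution this is because deleting a degree-$2$ vertex and joining its two (non-adjacent) neighbours leaves the degree of each neighbour unchanged. Hence every topological minor of a planar subcubic graph is again planar and subcubic, so no member of $\mathcal{H}$ is a topological minor of any planar subcubic graph. Consequently the entire class of planar subcubic graphs is contained in the class of $\mathcal{H}$-topological-minor-free graphs, and the \NP-hardness of the three problems on planar subcubic graphs (Theorems~\ref{thm:MwC:C2}--\ref{thm:NMwCDT:C2}) transfers. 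Since a purported edge or node multiway cut of size at most~$k$ can be verified in polynomial time, all three problems lie in \NP. The only points requiring care are the two folklore facts --- the minor/topological-minor equivalence for subcubic targets and the invariance of degrees and planarity under dissolution --- but neither is a genuine obstacle; the substantive work has already been done in Theorems~\ref{thm:MwC:C2}--\ref{thm:NMwCDT:C2}.
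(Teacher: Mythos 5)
Your proposal is correct and follows essentially the same route as the paper: it instantiates the Robertson--Seymour meta-classification for ${\cal H}$-topological-minor-free graphs, using polynomial-time solvability on bounded treewidth~\cite{ALS91} together with \NP-hardness on planar subcubic graphs from Theorems~\ref{thm:MwC:C2}--\ref{thm:NMwCDT:C2}. The only difference is that you unpack the two directions of the meta-theorem (the minor/topological-minor equivalence for subcubic patterns plus the grid-minor theorem, and the closure of planar subcubic graphs under topological minors), which the paper leaves inside the cited result; both folklore steps are stated correctly.
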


\noindent
Let 
the {\it $\ell$-subdivision} of a graph~$G$ be the graph obtained from~$G$ after replacing each edge $uv$ by a path of 
$\ell+1$ edges with end-vertices $u$ and $v$. 
A problem~$\Pi$ is \NP-hard
{\it under edge subdivision of subcubic graphs} if for every integer $j \geq 1$ there is an~$\ell \geq j$ such that:
if $\Pi$ is \NP-hard for the class ${\cal G}$ of subcubic graphs, then $\Pi$ is \NP-hard for the class $G^\ell$ consisting of the $\ell$-subdivisions of the graphs in ${\cal G}$.
Now say that $\Pi$ is polynomial-time solvable on graphs of bounded treewidth and \NP-hard for subcubic graphs and under edge subdivision of subcubic graphs. The meta-classification from 
Johnson et al.~\cite{JMOPPSV} states that for every {\it finite} set ${\cal H}$, $\Pi$ on ${\cal H}$-subgraph-free graphs is polynomial-time solvable if ${\cal H}$ contains a graph from ${\cal S}$, and \NP-hard otherwise. Here, ${\cal S}$ is the set consisting of all disjoint unions of zero or more paths and subdivided claws ($4$-vertex stars in which edges may be subdivided). Figure~\ref{fig:s} shows an example of a graph belonging to ${\cal S}$. Results from
Arnborg, Lagergren and Seese~\cite{ALS91} and Johnson et al.~\cite{JMOPPSV} 
show the first two properties. Theorems~\ref{thm:MwC:C2}--\ref{thm:NMwC:C2} show the last property. Thus,~we obtain: 

\begin{figure}[t]
	\centering
	\includegraphics[width=0.6\textwidth]{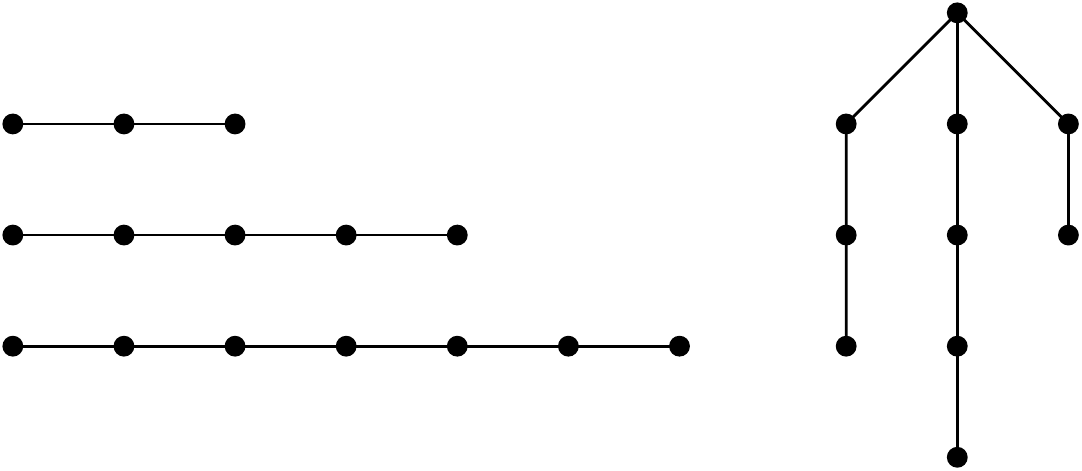}
	\caption{An example of a graph, namely $P_1+P_5+P_7+S_{2,3,4}$, that belongs to the set ${\cal S}$.}\label{fig:s}
\end{figure}

\begin{corollary}
For every finite set of graphs $\mathcal{H}$, {\sc Edge Multiway Cut} and both versions of {\sc Node Multiway Cut} on $\mathcal{H}$-subgraph-free graphs are polynomial-time solvable if $\mathcal{H}$ contains a graph from $\mathcal{S}$, and \NP-complete  otherwise.
\end{corollary}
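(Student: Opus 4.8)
The plan is to obtain the corollary directly from the meta-classification of Johnson et al.~\cite{JMOPPSV}. That result guarantees exactly the stated dichotomy for every finite set $\mathcal{H}$ once the problem $\Pi$ in question is shown to (i) be polynomial-time solvable on graphs of bounded treewidth, (ii) be \NP-hard for subcubic graphs, and (iii) be \NP-hard under edge subdivision of subcubic graphs; membership in \NP is clear for all three of our problems, so \NP-hardness upgrades to \NP-completeness. For all three problems, property (i) is given by Arnborg, Lagergren and Seese~\cite{ALS91}, and property (ii) is immediate from Theorems~\ref{thm:MwC:C2}--\ref{thm:NMwCDT:C2}, since every planar subcubic graph is subcubic. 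Property (iii) is verified in~\cite{JMOPPSV}; I also sketch a direct argument, as it is the only ingredient not already supplied by the literature or by the preceding sections.

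For \textsc{Edge Multiway Cut} I would observe that for every $\ell\ge 1$, every graph $G$, and every terminal set $T$, the instances $(G,T)$ and $(G^{\ell},T)$ have edge multiway cuts of the same minimum size. Writing $P_{uv}$ for the length-$(\ell+1)$ path replacing an edge $uv$ in $G^{\ell}$, one direction replaces each edge $uv$ chosen in a cut of $G$ by a single edge of $P_{uv}$, and the other direction replaces a cut $S'$ of $G^{\ell}$ by $\{uv : S'\cap E(P_{uv})\ne\emptyset\}$; in both cases a $t$--$t'$ path on one side lifts or projects to a $t$--$t'$ path on the other (the pairwise disjointness of the edge sets $E(P_{uv})$ gives the size bound). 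Hence the identity map $(G,T,k)\mapsto(G^{\ell},T,k)$ is a reduction from \textsc{Edge Multiway Cut} on subcubic graphs to \textsc{Edge Multiway Cut} on $\ell$-subdivisions of subcubic graphs, for every $\ell$; taking $\ell=j$ yields (iii).

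For both versions of \textsc{Node Multiway Cut} the same map $(G,T,k)\mapsto(G^{\ell},T,k)$ works, with one routine caveat for the undeletable-terminal variant: if $G$ contains an edge between two terminals it is a no-instance, but $G^{\ell}$ need not be (one may delete an internal vertex of the subdividing path), so such instances are first sent to a fixed no-instance $\ell$-subdivision --- for example the $\ell$-subdivision of a triangle whose three branch vertices are terminals, with budget $0$. On all other instances, and unconditionally for \textsc{Node Multiway Cut with Deletable Terminals}, one verifies that $(G,T,k)$ and $(G^{\ell},T,k)$ are equivalent: a node multiway cut of $G$ is already one of $G^{\ell}$ of the same size, while a node multiway cut $S'$ of $G^{\ell}$ yields one of $G$ of size at most $|S'|$ by keeping $S'\cap V(G)$ and replacing each internal vertex of $S'$ lying on some $P_{uv}$ by an endpoint of $uv$ --- a non-terminal endpoint in the undeletable case, which exists once the adjacent-terminal instances have been set aside --- with the pairwise-disjoint vertex sets of the $P_{uv}$ giving the size bound and the usual lift-and-project argument preserving the cut property. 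This establishes (iii), and the corollary follows. I expect this last verification for the node versions --- getting the size bound and the adjacent-terminal degeneracy exactly right --- to be the only genuinely delicate point; the rest is a direct appeal to~\cite{ALS91}, \cite{JMOPPSV}, and Theorems~\ref{thm:MwC:C2}--\ref{thm:NMwCDT:C2}.
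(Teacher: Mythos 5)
Your proposal takes essentially the same route as the paper: the corollary is obtained by invoking the meta-classification of Johnson et al.~\cite{JMOPPSV}, with polynomial-time solvability on bounded treewidth from~\cite{ALS91}, \NP-hardness for subcubic graphs from Theorems~\ref{thm:MwC:C2}--\ref{thm:NMwCDT:C2}, and \NP-hardness under edge subdivision from~\cite{JMOPPSV} (the paper restates the edge-version subdivision argument as Lemma~\ref{lem:Mwc:C3}), with \NP-membership upgrading hardness to completeness. Your extra direct verification of the subdivision condition, including the adjacent-terminal caveat for the undeletable-terminal variant, is correct but merely supplements what the paper obtains by citation.
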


\section{The Proof of Theorem~\ref{thm:MwC:C2}}\label{s-1}

In this section, we show that {\sc Edge Multiway Cut} is \NP-complete on subcubic graphs. We reduce the problem from {\sc Planar 2P1N-3SAT}, which is a restricted version of {\sc 3-SAT}. Given a CNF-formula $\Phi$ with the set of variables $X$ and the set of clauses $C$, the \emph{incidence graph} of the formula is the graph $G_{X,C}$ which is a bipartite graph with one of the partitions containing a vertex for each variable and the other partition containing a vertex for each clause of $\Phi$. There exists in $G_{X, C}$ an edge between a variable-vertex and a clause-vertex if and only if the variable appears in the clause. We define {\sc Planar 2P1N-3SAT} as follows.

\problem{{\sc Planar 2P1N-3SAT}}{A set $X= \{x_1, \ldots, x_n\}$ of variables and a CNF formula $\Phi$ over $X$ and clause set $C$ with each clause containing at most three literals and each variable occurring twice positively and once negatively in $\Phi$ such that $G_{X, C}$ is planar.}{Is there an assignment $\mathcal{A}: X \rightarrow \{0,1\}$ that  satisfies $\Phi$?}

\medskip
\noindent
The above problem was shown to be \NP-complete by 
Dahlhaus et al.~\cite{DJPSY94}. By their construction, each variable occurs in at least two clauses having size~$2$. This property becomes important later in our \NP-completeness proof.

We need two further definitions. Recall that in {\sc Weighted Edge Multiway Cut}, we are given a function $\omega: E(G)\rightarrow \mathbb{Q}^+$ in addition to $G,T,k$. The goal is to decide if $(G,T)$ admits an edge multiway cut of total weight at most $k$. If the image of $\omega$ is the set $X$, we denote the corresponding {\sc Weighted Edge Multiway Cut} problem as $X$-{\sc Edge Multiway Cut}. Also note that if an edge/node multiway cut $S$ has smallest possible size (weight) among all edge/node multiway cuts for the pair $(G,T)$, then $S$ is a \emph{minimum(-weight)} edge/node multiway cut.

We show the reduction in two steps. In the first step, we reduce from {\sc Planar 2P1N-3SAT} to {\sc $\{1,2,3,6\}$-Edge Multiway Cut} restricted to planar graphs of maximum degree~$5$ where the terminals all have degree~$3$. In the second step, we show how to make the instance unweighted while keeping it planar and making its maximum degree bounded above by~$3$.

\EMWCthm*

\begin{proof}
	Clearly, {\sc Edge Multiway Cut} is in \NP.
	We reduce {\sc Edge Multiway Cut} from {\sc Planar 2P1N-3SAT}. Let $\Phi$ be a given CNF formula with at most three literals in each clause and each variable occurring twice positively and once negatively. 
	
	We assume that each clause has size at least~$2$ and every variable occurs in at least two clauses of size~$2$. Let $X = \{x_i \mid 1\leq i \leq n\}$ be the set of variables in $\Phi$ and $C = \{c_j \mid 1\leq j \leq m\}$ be the set of clauses. We assume that the incidence graph $G_{X,C}$ is planar. By the reduction 
of Dahlhaus et al.~\cite{DJPSY94}, {\sc Planar 2P1N-3SAT} is \NP-complete for such instances.

	\begin{figure}
		\centering
		\includegraphics[width=1\linewidth]{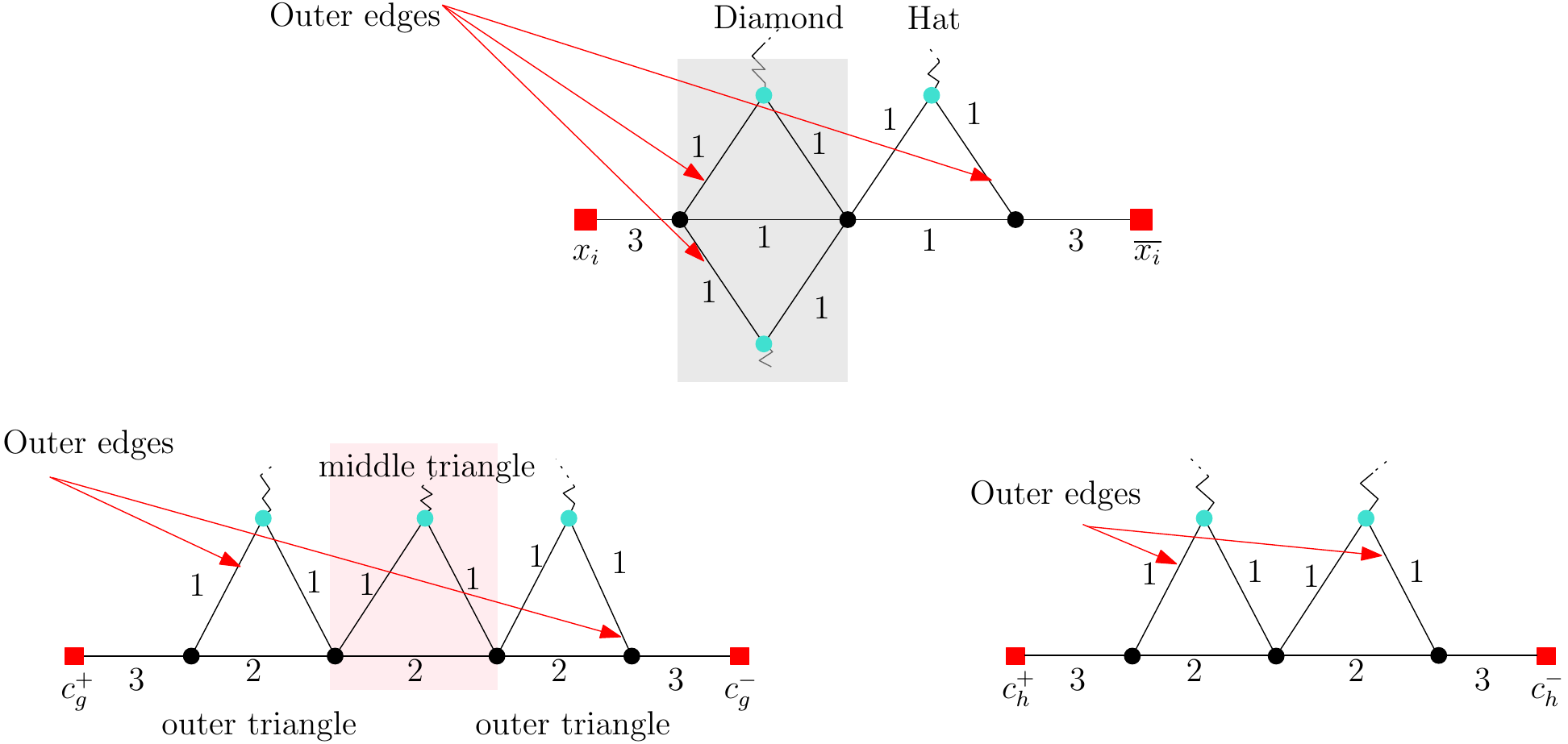}
	
\caption{The gadgets for the variables (top) as well as those for the clauses (bottom). The bottom-left gadget corresponds to a clause with three literals whereas the bottom-right one corresponds to a clause with two literals. The terminals are depicted as red squares.}\label{fig:gadgets}	
	\end{figure}

	We now describe the graph construction.	For each vertex of $G_{X,C}$ corresponding to a clause $c_j$ in $C$, we create a clause gadget (depending on the size of the clause), as in Figure~\ref{fig:gadgets}. For each vertex of $G_{X,C}$ corresponding to a variable $x_i \in X$, we create a variable gadget, also shown in Figure~\ref{fig:gadgets}. The gadgets have two terminals each (marked as red squares in Figure~\ref{fig:gadgets}), a positive and a negative one. In a variable gadget, the positive terminal is attached to the diamond and the negative one to the hat, by edges of weight~$3$; refer to Figure~\ref{fig:gadgets}. In a clause gadget, each literal corresponds to a triangle, with these triangles connected in sequence, and the positive and negative terminal are attached to triangles at the start and end of the sequence, again by edges of weight~$3$.

	Each degree-$2$ vertex in a gadget (marked blue in Figure~\ref{fig:gadgets}) is called a \emph{link}. The two edges incident on a link are called \emph{connector-edges}. The edge of such a triangle that is not incident on the link is called the \emph{base} of the triangle. For a variable $x_i$, if $x_i \in c_j$ and $x_i \in c_k$ for clauses $c_j,c_k$, then we connect the links of the diamond of $x_i$ to some link of the gadgets for $c_j$ and $c_k$, each by identifying them with the respective link in the clause gadget. If $\overline{x_i} \in c_l$ for clause $c_l$, then we connect the link of the hat of $x_i$ and some link on the gadget for $c_l$, again by identifying the two links. An example of such variable and clause connections is depicted in Figure~\ref{fig:connection}. The structure formed by the link and the four connector-edges incident on it, is referred to as a \emph{link-structure}. By the assumptions on $\Phi$, we can create the link-structures such that each link in the variable gadget participates in exactly one link-structure and corresponds to one occurrence of the variable. Similarly, each link of a clause gadget participates in exactly one link-structure.
	
	The graph thus created is denoted by $G$. We can construct $G$ in such a way that it is planar, because $G_{X,C}$ is planar and has maximum degree~$3$. Note that $G$ has maximum degree~$5$. Let $T$ be the set of terminals in the constructed graph $G$. Note that $G$ has a total of $2n+2m$ terminals.
	
	We observe that all edges in $G$ have weight at most~$6$. Non-terminal vertices are incident on edges of total weight at most~$8$. Crucially, terminals are incident on edges of total weight at most~$3$.

	\begin{figure}[h]
		\begin{minipage}[t]{\textwidth}
			\centering
			\includegraphics[width=0.45\linewidth]{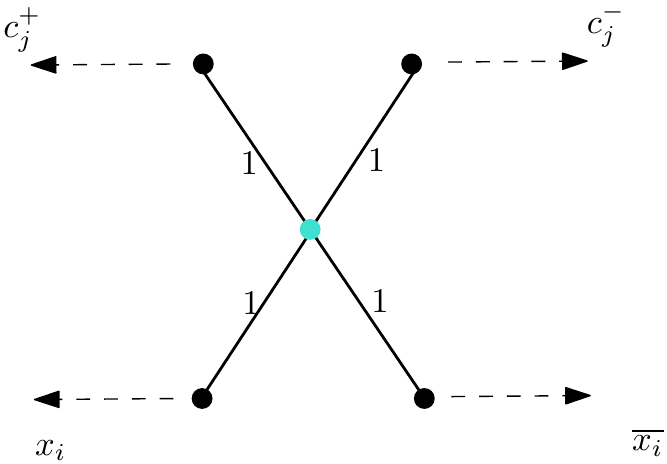}
		\end{minipage}		
		\begin{minipage}[t]{\textwidth}
			\centering
			
			\caption{The figure shows a link-structure formed by the connector-edges of a clause-triangle and its corresponding variable-triangle. The two bases that complete the triangles are not drawn.}\label{fig:link}		
		\end{minipage}
	\end{figure}
	
	We introduce some extra notions to describe the constructed graph $G$. The connector-edges closest to the terminals are called \emph{outer edges}, as indicated in Figure~\ref{fig:gadgets}. The structure formed by the two pairs of connector-edges and the link is called the \emph{link-structure}; see Figure~\ref{fig:link}. Since each variable occurs twice positively and once negatively in $\Phi$, the constructed graph $G$ has exactly $3n$ link-structures.

	We now continue the reduction to obtain an unweighted planar subcubic graph.
	We replace all the edges in $G$ of weight greater than $1$ by as many parallel edges between their end-vertices as the weight of the edge. Each of these parallel edges has weight~$1$. We refer to this graph as $G'$. Next, for each vertex $v$ in $G'$ of degree greater than $3$, we replace $v$ by a large honeycomb (hexagonal grid), as depicted in Figure~\ref{fig:honeycomb}, of $1000 \times 1000$ cells (these numbers are picked for convenience and not optimized). The neighbours of $v$, of which there are at most six by the construction of $G$, are now attached to distinct degree-$2$ vertices on the boundary of the honeycomb such that the distance along the boundary between any pair of them is $100$ cells of the honeycomb. These degree-$2$ vertices on the boundary are called the \emph{attachment points} of the honeycomb. The edges not belonging to the honeycomb that are incident on these attachment points are called \emph{attaching edges}. In the construction, we ensure that the attaching edges occur in the same cyclical order on the boundary as the edges to the neighbours of $v$ originally occurred  around $v$. Let the resultant graph be $\tilde{G}$.
	
	\begin{figure}[t]
		\centering
		\includegraphics[width=0.95\linewidth]{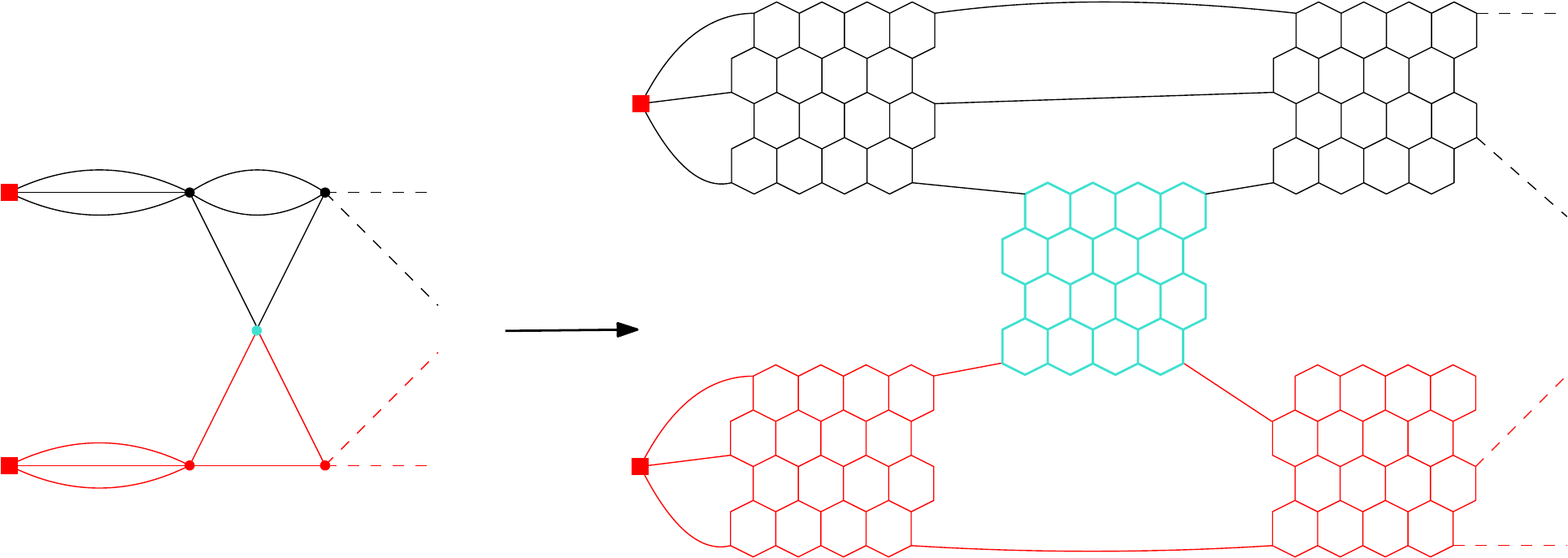}	
		\caption{Construction of $\tilde{G}$ from $G$ by replacing every edge of weight greater than 1 by as many parallel edges as its weight and then replacing the vertices of degree greater than 3 by a honeycomb of size $1000 \times 1000$.}
		\label{fig:honeycomb}
	\end{figure}
	
	Note that the degree of any vertex in $\tilde{G}$ is at most~$3$. For terminals, this was already the case in $G'$. Note that, therefore, terminals were not replaced by honeycombs to obtain~$\tilde{G}$. For non-terminals, this is clear from the construction of $G'$ and $\tilde{G}$. Moreover, all the edge weights of $\tilde{G}$ are equal to~$1$, and thus we can consider it unweighted. Also, all the replacements can be done as to retain a planar embedding of $G$ and hence, $\tilde{G}$ is planar. $\tilde{G}$ has size bounded by a polynomial in $n+m$ and can be constructed in polynomial time. Finally, we set $k=7n+2m$.
		
	For the sake of simplicity, we shall first argue that $\Phi$ is a {\sc yes} instance of {\sc Planar 2P1N-3SAT} if and only if $(G,T,k)$ is a {\sc yes} instance of {\sc $\{1,2,3,6\}$-Edge Multiway Cut}. Later, we show that the same holds for $\tilde{G}$ by proving that no edge of any of the honeycombs is ever present in any minimum edge multiway cut in $\tilde{G}$. 
	
	Suppose that $\mathcal{A}$ is a truth assignment satisfying $\Phi$. Then, we create a set of edges $S \subseteq E(G)$, as follows: 
	
	\begin{itemize}
		\item If a variable is set to ``true'' by $\mathcal{A}$, then add to $S$ all the three edges of the hat in the corresponding gadget. If a variable is set to ``false'' by $\mathcal{A}$, then add to $S$ all the five edges of the diamond.
		\item For each clause, pick a true literal in it and add to $S$ all the three edges of the clause-triangle corresponding to this literal.
		\item Finally, for each link-structure with none of its edges in $S$ yet, add the two connector-edges of its clause-triangle to $S$.
	\end{itemize}
	
	\begin{claim} \label{clm:forward}
		$S$ is an edge multiway cut of $(G, T)$ of weight at most $7n+2m$.
	\end{claim}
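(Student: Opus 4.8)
The plan is to establish the two parts of the claim separately: that $S$ has total weight at most $7n+2m$, and that $S$ is an edge multiway cut for $(G,T)$.

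\textbf{Weight.} I would tally the edges added by the three construction steps, writing $t$ for the number of variables set true by $\mathcal{A}$ and $f=n-t$ for the number set false. The first step contributes $3t+5f$ (hats and diamonds consist only of weight-$1$ edges, and the weight-$3$ terminal edges are never added). For the third step I need the number of link-structures still containing no edge of $S$ after the first two steps. The key observation is that a link-structure of a \emph{false} literal occurrence is automatically hit by the first step — deleting all hat edges of a true variable (resp.\ all diamond edges of a false variable) removes the two variable-side connector-edges of that link-structure — whereas a link-structure of a \emph{true} literal occurrence is untouched by it. Hence the first step hits exactly $t+2f=2n-t$ link-structures, and the second step hits exactly $m$ further ones: one per clause, namely the chosen true literal, which is always a true occurrence and so lies outside the first step's set, and the $m$ chosen literals belong to distinct clauses. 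Thus exactly $3n-(2n-t)-m=n+t-m$ link-structures remain for the third step, and this number is non-negative \emph{because $\mathcal{A}$ satisfies $\Phi$}: each chosen true literal is one of the at most $2t$ positive occurrences of true variables or one of the at most $n-t$ negative occurrences of false variables, so $m\le n+t$. The total weight is therefore $3t+5f+3m+2(n+t-m)=7n+m\le 7n+2m$.

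\textbf{Multiway cut.} I would first isolate a \emph{coverage lemma}: every link-structure has either both of its variable-side connector-edges in $S$, or both of its clause-side connector-edges in $S$. This is exactly the case analysis above — false literal occurrences are covered on the variable side by the first step; true literal occurrences are covered on the clause side by the second step (the entire clause-triangle of the chosen literal, hence in particular its two connector-edges, is deleted) or by the third step. It follows that no link vertex is connected in $G-S$ both to its variable gadget and to its clause gadget, so every connected component of $G-S$ sits inside a single gadget, together with the terminals attached to that gadget and the links attached to it (which are dead ends). It then suffices to verify, for each gadget type, that such a component holds at most one terminal: in a variable gadget, one of the hat (variable true) or the diamond (variable false) has all its edges in $S$, isolating its terminal, while the other terminal lies in the component formed by the intact half, its dead-end links, and itself; in a clause gadget, deleting the clause-triangle of the chosen true literal breaks the sequence of triangles into pieces, each containing at most one of the clause gadget's two terminals (plus possibly terminal-free remnants). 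Since the weight-$3$ edges attaching terminals are never in $S$, each terminal shares its component with its unique gadget-neighbour only, and the above shows that no two terminals end up in a common component.

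\textbf{Anticipated obstacle.} The delicate part is the gadget-by-gadget verification of the last step: using the exact structure of the gadgets in Figure~\ref{fig:gadgets} and of the link-structures in Figure~\ref{fig:link}, one must argue that deleting a single clause-triangle genuinely disconnects the two clause-terminals, and that no component quietly acquires a second terminal along a chain of link-structures. The coverage lemma is precisely the tool that closes this gap, so the real crux is getting its three-way case distinction — false literal, chosen true literal, unchosen true literal — exactly right. The weight computation, by contrast, is routine bookkeeping once one notices that the satisfiability of $\mathcal{A}$ is what forces $n+t-m\ge 0$.
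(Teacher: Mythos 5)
Your argument follows essentially the same route as the paper's proof: your ``coverage lemma'' (every link-structure has either both variable-side or both clause-side connector-edges in $S$) is exactly the paper's observation that each link becomes a dead end towards one of its two gadgets, and the gadget-by-gadget separation plus the per-link-structure weight tally is the paper's own accounting ($n$ variable bases, $m$ clause bases, and $2$ connector-edges per link-structure). So the structure of the proof is sound and matches the intended one.

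One bookkeeping slip, though: the bases of the clause-triangles have weight $2$, not $1$ (the paper states this explicitly when it counts ``$m$ triangle-bases of weight~$2$ from the clause gadgets'' and again in Case~6), so the second step contributes $4m$, not $3m$. The corrected total is $3t+5f+4m+2(n+t-m)=7n+2m$, which still meets the required bound, so the claim survives, but your stated total of $7n+m$ is wrong and signals a misreading of the edge weights --- a detail that matters in a weighted reduction. Relatedly, the appeal to satisfiability of $\mathcal{A}$ to get $n+t-m\ge 0$ is a red herring: that quantity is, by definition, the number of link-structures untouched after the first two steps, hence nonnegative automatically; satisfiability is needed only to guarantee that each clause has a true literal to pick in the second step (which is also what makes the clause terminals separable and the coverage lemma go through).
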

	\begin{claimproof}
		For each variable, either the positive literal is true, or the negative one. Hence, either all the three edges of its hat are in $S$ or all the five edges of the diamond. Therefore, all the paths between terminal pairs of the form  $x_i - \overline{x_i}$, for all $1\leq i \leq n $, are disconnected in $G - S$. Consider the link-structure in Figure~\ref{fig:link}. By our choice of $S$, at least one endpoint of each link in $G - S$ is a vertex of degree~$1$, hence a dead end. Therefore, no path connecting any terminal pair in $G - S$ passes through any link.  As all the paths in $G$ between a variable-terminal and a clause-terminal must pass through some link, we know that all terminal pairs of this type are disconnected in $G - S$. Since $\mathcal{A}$ is a satisfying truth assignment of $\Phi$, all the edges of one triangle from every clause gadget are in $S$. Hence, all the paths between terminal pairs of the form $c_j^+ - c_j^-$, for all $1\leq j \leq m$, are disconnected in $G - S$. Hence, $S$ is an edge multiway cut. 
		
		It remains to show that the weight of $S$ is at most $7n+2m$. Since $\mathcal{A}$ satisfies each clause of $\Phi$, there are exactly~$m$ triangle-bases of weight~2 from the clause gadgets in $S$. Similarly, the variable gadgets contribute exactly $n$ bases to $S$. Finally, for each of the $3n$ link-structures, by the definition of $S$
		and the fact that $\mathcal{A}$ is a satisfying assignment,
		either the two connector-edges of the variable-triangle are in $S$ or the two connector-edges of the clause-triangle. Together, they contribute a weight of $6n$ to the total weight of $S$. Therefore, $S$ is an edge multiway cut in $G$ of weight at most $7n+2m$. 
	\end{claimproof}
	
\noindent
Hence, 	$(G, T, k)$ is a {\sc yes} instance of  {\sc  $\{1,2,3,6\}$-Edge Multiway Cut}.
\medskip
	
	Conversely, assume that $(G, T, k)$ is a {\sc yes} instance of  {\sc  $\{1,2,3,6\}$-Edge Multiway Cut}. Hence, there exists an edge multiway cut of $(G,T)$ of weight at most $7n+2m$. We shall demonstrate an assignment that satisfies $\Phi$. Before that, we shall discuss some structural properties of a minimum-weight edge multiway cut. In the following arguments, we assume that the clauses under consideration have size three, unless otherwise specified. While making the same arguments for clauses of size~$2$ is easier, we prefer to argue about clauses of size three for generality.
	
	\begin{claim}[adapted from \cite{DJPSY94}]\label{clm:weight3edges}
		If $e$ is an edge in $G$ incident on a non-terminal vertex $v$ of degree~$> 2$ such that $e$ has weight greater than or equal to the sum of the other edges incident on $v$, then there exists a minimum-weight edge multiway cut in $G$ that does not contain $e$.
	\end{claim}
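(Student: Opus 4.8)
The plan is a standard exchange argument. First I would fix any minimum-weight edge multiway cut $S$ of $(G,T)$; if $e\notin S$ we are already done, so suppose $e=vw\in S$, where $v$ is the non-terminal vertex of degree $>2$. Let $E_v$ be the set of edges incident on $v$ other than $e$, so that the hypothesis reads $\omega(e)\ge\omega(E_v)$, where for a set $F$ of edges I write $\omega(F):=\sum_{f\in F}\omega(f)$. I would then consider the modified set $S':=(S\setminus\{e\})\cup E_v$, obtained by ``uncutting'' the heavy edge $e$ and instead ``cutting'' all the lighter edges around $v$. Since $e\notin S'$, it suffices to show that $S'$ is again an edge multiway cut and that $\omega(S')\le\omega(S)$: minimality of $S$ then forces $\omega(S')=\omega(S)$, so $S'$ is the desired minimum-weight cut avoiding $e$.

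The weight bound is immediate, since $S\setminus S'=\{e\}$ and $S'\setminus S\subseteq E_v$, whence $\omega(S')\le\omega(S)-\omega(e)+\omega(E_v)\le\omega(S)$ by the assumption on $e$. The substantive step is to check that $S'$ is an edge multiway cut, i.e.\ that every connected component of $G-S'$ contains at most one terminal. The key observation is that in $G-S'$ every edge of $E_v$ has been deleted, so the only edge of $G-S'$ incident on $v$ is $e$, making $v$ a pendant (degree-$\le 1$) vertex. I would exploit this as follows: deleting $v$ from $G-S'$ produces exactly the graph $(G-S)-v$ --- both have vertex set $V(G)\setminus\{v\}$ and edge set $E(G)\setminus(S\cup E_v)$, using that $e\in S$ --- and since $S$ is a multiway cut and $v\notin T$, every component of $(G-S)-v$ contains at most one terminal. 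Re-inserting the non-terminal pendant vertex $v$ together with the single edge $e$ cannot merge two components or add a terminal to a component that already has one, so every component of $G-S'$ still contains at most one terminal. Thus $S'$ is an edge multiway cut, which completes the argument.

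I do not expect a genuine obstacle here; the only care needed is in the bookkeeping claim that removing $v$ from $G-S'$ and from $G-S$ yields literally the same graph, which rests on $e\in S$ together with the fact that the edges incident on $v$ are exactly $\{e\}\cup E_v$. (The hypothesis that $v$ has degree $>2$ is not actually used in the proof; it merely reflects the setting in which the claim will later be applied, and the argument goes through verbatim when $v$ has degree $1$ or $2$ as well.)
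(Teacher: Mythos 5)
Your proof is correct and is essentially the same local-replacement (exchange) argument that the paper adapts from Dahlhaus et al.: swap $e$ for the other edges incident on the non-terminal $v$, note the weight does not increase, and verify feasibility via the fact that $v$ becomes a pendant non-terminal in $G-S'$. Your side remark that the degree~$>2$ hypothesis is not needed for this particular claim is also accurate; it only matters for how the claim is applied later.
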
 

	The above claim implies that there exists a minimum-weight multiway cut containing no such edge $e$. To see this, note that an iterative application of the local replacement used in Claim~\ref{clm:weight3edges} would cause a conflict in the event that the replacement is cyclical. Suppose that the edges are replaced in the sequence $e \rightarrow e_1 \rightarrow \ldots \rightarrow e_r \rightarrow e$. Then the weight of $e_1$, denoted by $w(e_1)$ must be strictly less than the weight of $e$. Similarly, $w(e_i)<w(e_j)$ for $i<j$. This would mean that $w(e)<w(e)$, which is a contradiction. 

	\begin{claim}[\cite{DJPSY94}]\label{clm:cycle2edges}
		If a minimum-weight edge multiway cut contains an edge of a cycle, then it contains at least two edges from that cycle.
	\end{claim}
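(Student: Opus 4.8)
The plan is to prove this by a standard exchange (uncrossing) argument: if a minimum-weight edge multiway cut used only a single edge of some cycle, then that edge is redundant and can be dropped to produce a strictly cheaper multiway cut. Formally, let $S$ be a minimum-weight edge multiway cut of $(G,T)$, let $C$ be a cycle of $G$, and suppose for contradiction that $S$ meets $C$ in exactly one edge, say $S\cap E(C)=\{e\}$ with $e=uv$.

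First I would observe that, since $e$ is the only edge of $C$ in $S$, the $u$--$v$ path $P:=C-e$ is entirely contained in $G-S$; in particular $u$ and $v$ lie in the same connected component of $G-S$. Next I would consider the set $S':=S\setminus\{e\}$ and note that $G-S'$ is obtained from $G-S$ by adding back the single edge $e=uv$. Since $u$ and $v$ already belong to the same component of $G-S$ (via $P$), adding $e$ does not merge any two components: the partition of $V(G)$ into connected components is the same in $G-S'$ as in $G-S$. Hence every component of $G-S'$ still contains at most one terminal, i.e., $S'$ is again an edge multiway cut of $(G,T)$.

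Finally, because the weight function takes values in $\mathbb{Q}^+$, every edge has strictly positive weight, so $\omega(S')=\omega(S)-\omega(e)<\omega(S)$, contradicting the choice of $S$ as a minimum-weight edge multiway cut. Therefore $|S\cap E(C)|\neq 1$, and combined with the hypothesis that $S$ contains at least one edge of $C$, this gives $|S\cap E(C)|\geq 2$, as claimed.

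There is no genuine obstacle in this argument; the only point requiring care is the use of strict positivity of the edge weights, which is exactly what makes deleting $e$ strictly decrease the total weight and thus yields the contradiction. This is guaranteed by the definition of {\sc Weighted Edge Multiway Cut}. The unweighted case is the special instance $\omega\equiv 1$, and the same reasoning applies verbatim in the multigraphs $G'$ and $\tilde G$ considered later (where, in particular, a pair of parallel edges forms a $2$-cycle to which the statement also applies).
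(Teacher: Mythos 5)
Your proof is correct: since the rest of the cycle provides a $u$--$v$ path in $G-S$, deleting the lone cut edge $e$ cannot merge components, so $S\setminus\{e\}$ remains an edge multiway cut of strictly smaller weight (edge weights being strictly positive), contradicting minimality. The paper itself gives no proof of this claim, citing Dahlhaus et al.~\cite{DJPSY94} instead, and your exchange argument is exactly the standard reasoning behind that cited fact, including its application to the parallel-edge $2$-cycles in $G'$ and $\tilde G$.
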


	It follows from Claim~\ref{clm:weight3edges} and the construction of $G$ that there exists a minimum-weight edge multiway cut for $(G,T)$ that does not contain the edges incident on the terminals. Among the minimum-weight edge multiway cuts that satisfy Claim~\ref{clm:weight3edges}, we shall select one that contains the maximum number of connector-edges and from the ones that satisfy both the aforementioned properties, we shall pick one that contains the maximum number of triangle-bases from clause gadgets of size~$2$. Let $S$ be a minimum edge multiway cut that fulfils all these requirements.

	We say a link $u$ incident on a gadget \emph{reaches} a terminal $t$ if $u$ is the first vertex on a path $P$ from the gadget to $t$ and no edge on $P$ is contained in $S$.
	
	A terminal $t$ is \emph{reachable} by a gadget if one of the links incident on the gadget reaches $t$. Note that, for any terminal $t'$ in the gadget, if $t$ is reached from some incident link by a path $P$, then $P$ can be extended to a $t'$-$t$ path in $G$ using only edges inside the gadget. However, among the edges used by such an extension, at least one must belong to $S$, or else $t=t'$.

	\begin{figure}[tbp]
		\centering
		\includegraphics[width=0.55\linewidth]{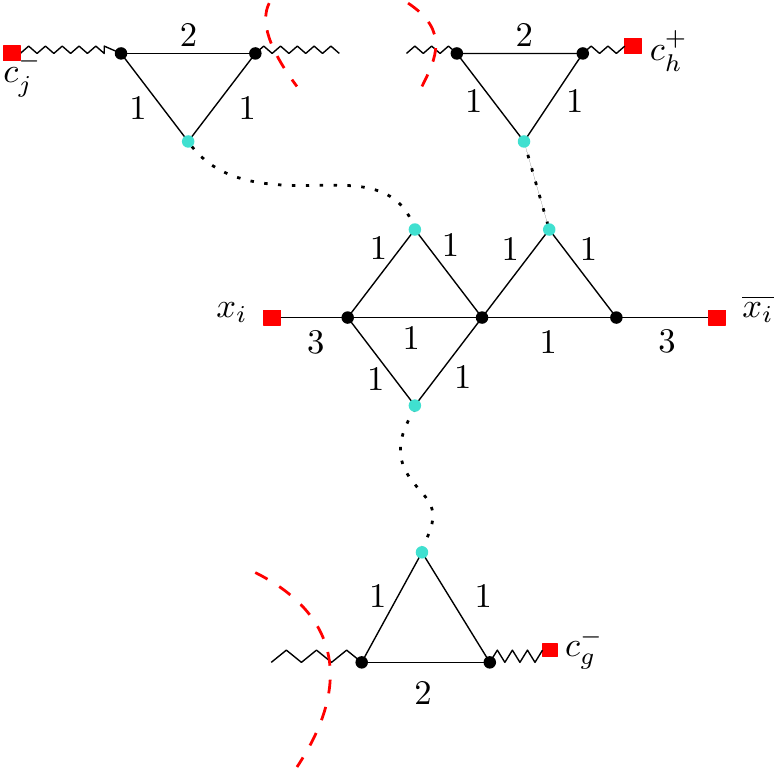}
		\caption{The variable interface of $x_i$. The positive literal $x_i$ occurs in the clauses $c_j$ and $c_g$, whereas $\overline{x_i}$ occurs in $c_h$. The dotted curves connect the two vertices that are identified. No terminal is reachable from the vertex closest to the red dashed lines in the direction of the paths crossed by it.}
		\label{fig:connection}	
	\end{figure}		
	
	\begin{claim}\label{clm:onebasepervar}
		$S$ contains exactly one base of a triangle from each variable gadget.
	\end{claim}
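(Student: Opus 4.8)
The plan is to prove the two inequalities ``$S$ contains at least one base from each variable gadget'' and ``$S$ contains at most one'' separately: the first by exhibiting a surviving $x_i^+$--$x_i^-$ path, the second by a tight global weight count.

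First half. Recall from Figure~\ref{fig:gadgets} that the gadget of $x_i$ is a diamond -- two triangles sharing their common base edge $b_D$ -- glued at one vertex to a hat -- a triangle with base $b_H$ -- and that $x_i^+$ (resp.\ $x_i^-$) is attached by a weight-$3$ edge to the endpoint of $b_D$ (resp.\ $b_H$) that does not lie on the hat (resp.\ diamond); in particular $b_D$ and $b_H$ share the vertex where diamond and hat meet. Thus $b_D$, $b_H$ and the two terminal edges form an $x_i^+$--$x_i^-$ path. As $S$ respects Claim~\ref{clm:weight3edges} it avoids both terminal edges, so if it also avoided $b_D$ and $b_H$ this path would survive in $G-S$ -- a contradiction. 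Hence $S$ meets $\{b_D,b_H\}$, which already shows that every variable gadget contributes at least one base.

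Second half. I would decompose (most of) $E(G)$ into pairwise disjoint pieces on each of which $S$ is forced to spend a prescribed weight summing to $k=7n+2m$: (i) for each of the $3n$ link-structures, the four connector-edges at its link -- since every variable-terminal--clause-terminal path of $G$ runs through a link, which is a degree-$4$ vertex with two incident edges on its variable side and two on its clause side, $S$ must cut at least two of those four edges, for total weight $\ge 6n$; (ii) for each variable gadget the pair $\{b_D,b_H\}$, costing $\ge 1$ by the first half, for a total $\ge n$; and (iii) for each clause gadget one triangle-base of weight $2$, forced into $S$ in order to sever the ``spine'' joining the clause's two terminals -- here one uses that the spine runs through the triangle-bases (so cutting only two connector-edges of a clause-triangle leaves the spine intact) together with Claim~\ref{clm:cycle2edges} -- for a total $\ge 2m$. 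These pieces are pairwise disjoint and $w(S)\le 7n+2m$, so every inequality is tight; in particular $\sum_i w(S\cap\{b_{D_i},b_{H_i}\})=n$ with each summand $\ge 1$, forcing every summand to equal $1$.

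I expect the real work to be in items (i) and (iii): one must argue that in a minimum-weight cut respecting the tie-breaking rules no link can have both its variable side and its clause side switched off cheaply enough to need fewer than two of its connector-edges, and that the clause-base charged in (iii) is genuinely distinct from all connector-edges charged in (i) and is unavoidable even along paths that detour out of the clause gadget. If it turns out to be cleaner to avoid the clause-side analysis at this point, the second half can instead be obtained locally: if $S$ contained both $b_D$ and $b_H$ for some $x_i$, then Claim~\ref{clm:cycle2edges} forces an extra edge of $S$ from each diamond triangle and from the hat triangle, so $S$ already pays at least $8$ on the gadget's edges together with the connector-edges of its three link-structures, whereas the weight-$7$ pattern that cuts the whole hat together with the clause-side connector-edges of the two diamond link-structures -- which turns the diamond together with $x_i^+$ into an isolated component -- is a valid multiway cut; substituting it into $S$ contradicts minimality. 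In that route the case analysis verifying that the substitution never reconnects two distinct terminals is the crux.
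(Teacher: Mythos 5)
Your first half (``at least one base'') is essentially the paper's own argument and is fine. The second half, however, has a genuine gap on both routes you offer. In the global count, item (i) --- that $S$ must contain at least two of the four connector-edges of \emph{every} link-structure --- is precisely Claim~\ref{clm:at least 2 per link} of the paper, which is established only \emph{after} the present claim, only for the specially chosen extremal cut $S$, and whose proof (via Claim~\ref{clm:middletriangle} and the ``every variable occurs in two size-$2$ clauses'' property) uses Claim~\ref{clm:onebasepervar} itself; invoking it here is circular relative to the paper, and your one-line justification does not replace it. The fact that every variable-terminal--clause-terminal path passes through a link does not force $S$ to cut edges \emph{at} that link: each such path can be severed by a single edge far away (a base elsewhere in the clause gadget, or an edge of another gadget the path traverses), so a priori a minimum cut may leave a given link-structure with one edge or none. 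Items (ii) and (iii) are fine (for (iii), if $S$ avoided all bases of a clause gadget it would have to contain a weight-$6$ spine edge, which only strengthens the bound), but without (i) the $6n$ term is unavailable and the count does not close.

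The fallback local argument also has a gap, already in the weight accounting. If both bases of the $x_i$-gadget are in $S$, Claim~\ref{clm:cycle2edges} forces only one connector-edge from each of the three triangles, i.e.\ local weight at least $2+3=5$, not $8$: nothing forces the clause-side connector-edges of the gadget's three link-structures into $S$. Hence substituting your weight-$7$ pattern (the whole hat plus the clause-side connectors at the two diamond links) may \emph{increase} the weight, and no contradiction with minimality follows. Even when the swap is weight-neutral, its feasibility depends on which terminals the three links reach in $G-S$; this is exactly the six-case analysis the paper performs, where most replacements are only weight-neutral and are excluded via the secondary extremal criteria (maximum number of connector-edges, then maximum number of size-$2$ clause bases), and where the hardest case --- two links reaching the \emph{same} terminal --- needs the size-$2$-clause hypothesis. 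Your proposal leaves precisely this core of the proof open.
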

	\begin{claimproof}
	
		Clearly, $S$ must contain at least one base from each variable gadget, else by the fact that $S$ contains no edges incident on terminals, a path between the terminals in such a gadget would remain in $G-S$.
	
		Suppose that $S$ contains two bases of some variable gadget, say that of $x_i$.  By Claim~\ref{clm:cycle2edges}, $S$ must also contain at least three connector-edges from this variable gadget: at least two connector-edges (of the two triangles) of the diamond and at least one connector-edge of the hat. We claim that, without loss of generality, at least all the outer connector-edges must be in $S$. If for some triangle the outer connector-edge next to terminal $t$ is not in $S$, then the link incident on this triangle does not reach any terminal $t' \not= t$; otherwise, a $t$-$t'$ path would remain in $G-S$, a contradiction. Hence, we simultaneously replace all inner connector-edges for which the corresponding outer connector-edge is not in $S$ by their corresponding outer connector-edge. For the resulting set $S'$, the variable terminals of the gadget and their neighbours in $G$ form a connected component of $G-S'$. Since the link incident on a triangle for which the outer connector-edge (next to terminal $t$) was not in $S$ does not reach any terminal $t' \not= t$, $S'$ is feasible. Moreover, it has the same properties we demanded of $S$. Thus, henceforth, we may assume that all the outer connector-edges of the $x_i$-gadget are in $S$. 
		
		We now distinguish six cases based on how many links of the gadget reach a terminal:
			
		\medskip
		\noindent\textsf{Case 1.} {\em No link of the $x_i$ gadget reaches a terminal.} \\
		We can remove one of the two bases from $S$ without connecting any terminal pairs. This is so because in order to disconnect $x_i$ from $\overline{x_i}$, it suffices for $S$ to contain either the base of the diamond along with the two outer connector-edges or the base and outer connector-edge of the hat. No other terminal pairs are connected via the gadget by the assumption of this case. Hence, we contradict the minimality of $S$ (refer to Figure~\ref{fig:Case1}).

		\begin{figure}[t]
			\centering
			\includegraphics[width=\textwidth]{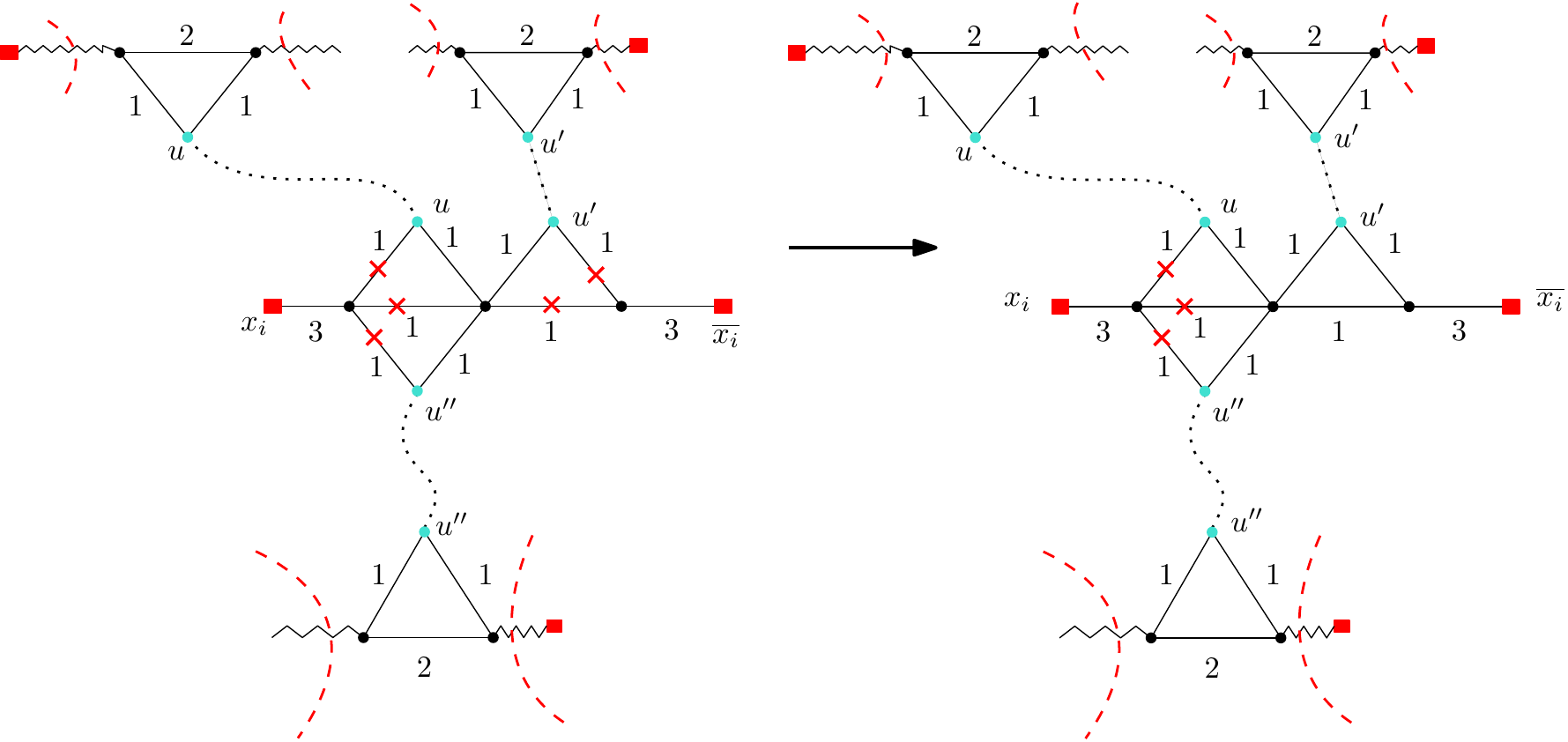}
			\caption{Case 1: In the figure on the left, we see the $x_i$-gadget with the three clause gadgets it is linked to. The dotted lines indicate that the links are identified with each other. None of the links can reach any terminal. The red dashed curves indicate that the path is intersected by the multiway cut $S$. The edges labeled with a red cross are contained in $S$. In the right figure, we show how $S$ can be modified without compromising its feasibility.}
			\label{fig:Case1}
		\end{figure}

		\begin{figure}
			\centering
			\includegraphics{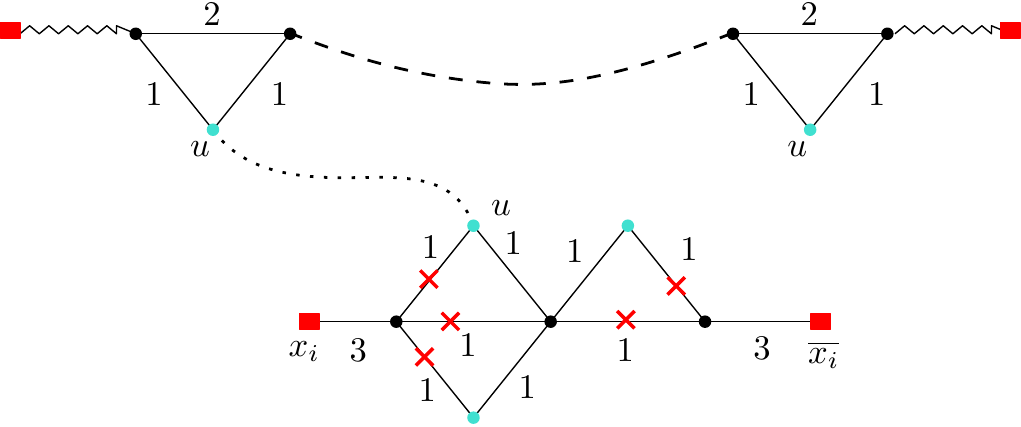}
			\caption{Case 2: In the figure we see the $x_i$-gadget with one of its links reaching two distinct terminals. The dotted curve indicates that the links are identified with each other. The dashed curve shows that there exists a path between its endpoints.}
			\label{fig:Case2}
		\end{figure}
		
		\medskip\noindent \textsf{Case 2.} {\em A link of the $x_i$-gadget reaches at least two distinct terminals.} \\
		By the definition of reaches, this implies that there is a path in $G-S$ between any two of the reached terminals (see Figure~\ref{fig:Case2}). This contradicts that $S$ is an edge multiway cut for $(G,T)$.
		
		\begin{figure}[t]
			\centering
			\includegraphics[width=\textwidth]{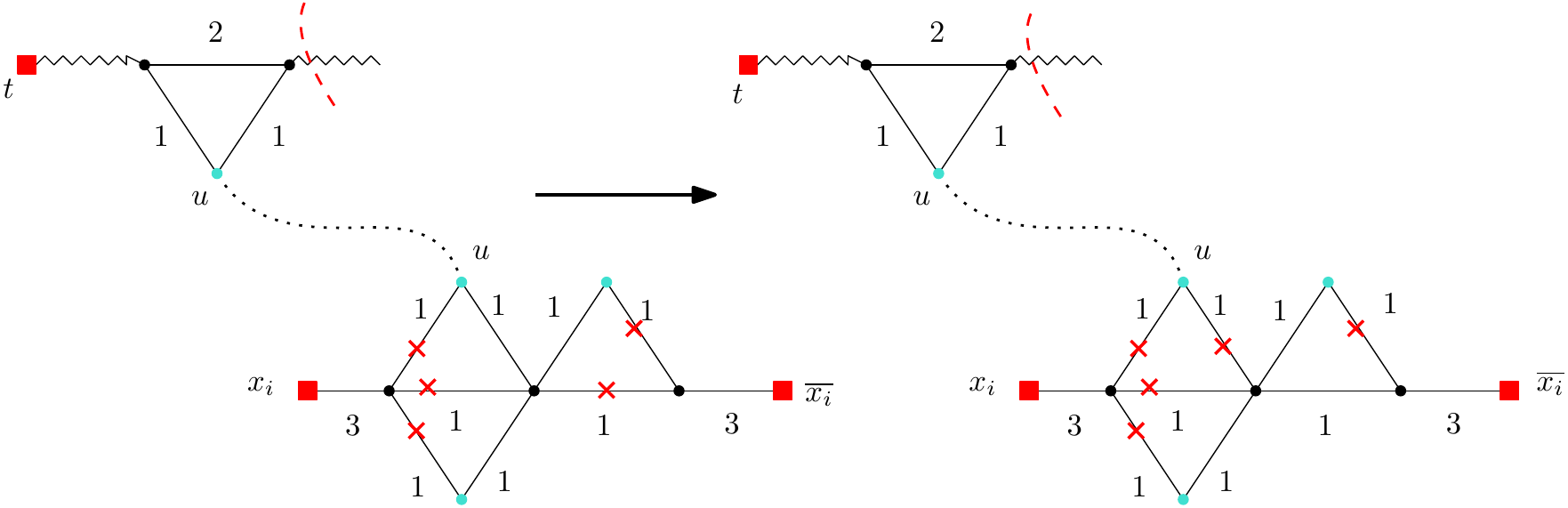}
			\caption{Case 3: The left figure shows the situation when exactly one link of the $x_i$-gadget reaches a terminal. The edges labeled with a red cross are contained in $S$. The right figure shows the replacement made in this case.}
			\label{fig:case3}
		\end{figure}

		\medskip\noindent \textsf{Case 3.}{\em Exactly one link $u$ of the $x_i$-gadget reaches some terminal $t$.} \\
		We remove from $S$ the base of a triangle that is not attached to $u$ and add the remaining connector-edge of the triangle that is attached to $u$ (if it is not already in $S$). Refer to Figure~\ref{fig:case3}. Consequently, although $u$ reaches $t$, both connector-edges incident on $u$ are in $S$. Since no other link reached any terminals and $x_i$ remains disconnected from $\overline{x_i}$ in $G-S$, we can obtain an edge multiway cut for $(G,T)$ satisfying Claim~\ref{clm:weight3edges} that has the same or less weight as $S$, but has strictly more connector-edges than $S$. This is a contradiction to our choice of $S$.
		
		\begin{figure}[t]
			\centering
			\includegraphics[width=\textwidth]{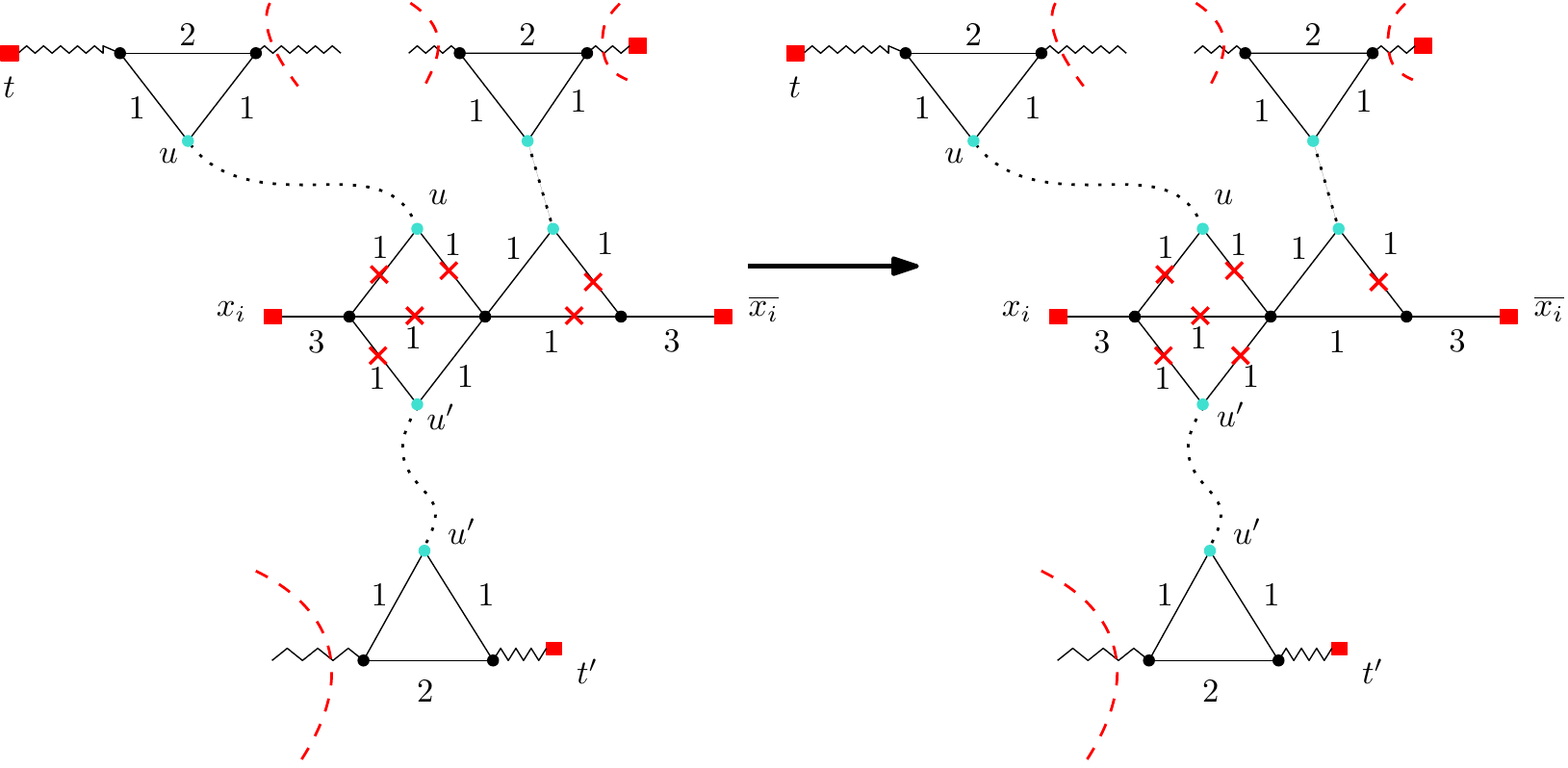}
			\caption{Case 4: The left figure shows the situation when exactly two links of the $x_i$-gadget reach two distinct terminals. The edges labeled with a red cross are contained in $S$. The right figure depicts the situation after the replacement.}
			\label{fig:case4}
		\end{figure}
		\medskip
		\noindent \textsf{Case 4.} {\em Exactly two links $u, u'$ of the $x_i$-gadget reach two distinct terminals $t$ and $t'$, respectively.} \\
		Recall that all three outer connector-edges are in $S$. Now at least one of the inner connector-edges of the gadget must be in $S$, or else $t$ would be connected to $t'$ via this gadget. In particular, both the connector-edges of at least one of the two triangles attached to $u, u'$ must be in $S$. Figure~\ref{fig:case4} depicts this scenario. We can remove from $S$ one of the two bases and add instead the remaining connector-edge of the other triangle (if it is not already in $S$). Consequently, although $u$ reaches $t$ and $u'$ reaches $t'$, all connector-edges incident on $u$ and $u'$ are in $S$. Moreover, $x_i$ and $\overline{x_i}$ are not connected to each other in $G-S$, as one base and its corresponding outer connector(s) are still in $S$. The transformation results in an edge multiway cut for $(G,T)$ satisfying Claim~\ref{clm:weight3edges} that has the same or less weight than $S$, but has strictly more connector-edges than $S$. This is a contradiction to our choice of $S$.
		
		\begin{figure}[t]
			\centering
			\includegraphics[width=\textwidth]{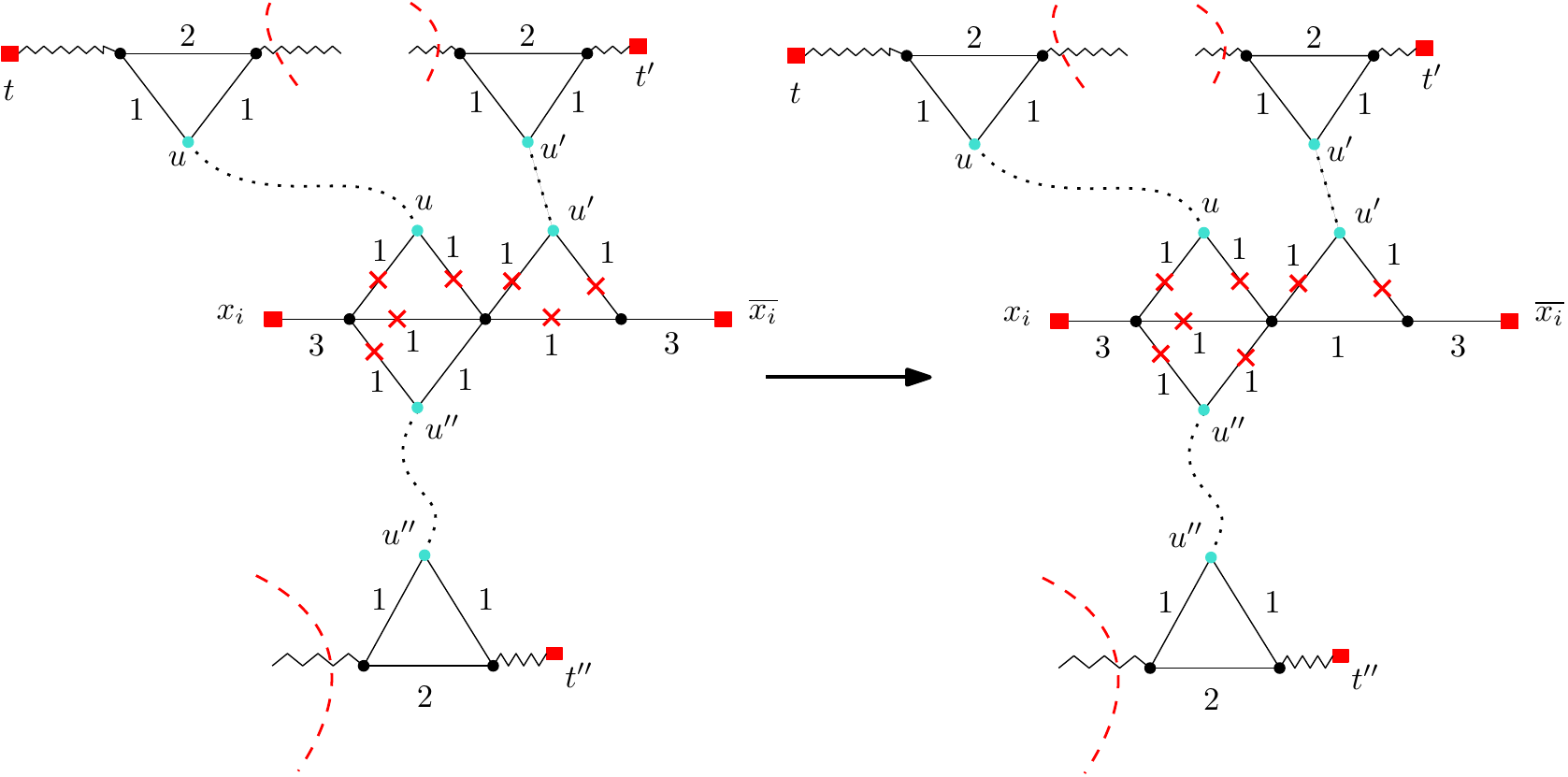}
			\caption{Case 5: The left figure shows the situation when all the three links of the $x_i$-gadget reach three distinct terminals. The edges labeled with a red cross are contained in $S$. The right figure shows the situation after the replacement.}
			\label{fig:case5}
		\end{figure}

		\medskip
		\noindent \textsf{Case 5.} {\em All the three links of the $x_i$-gadget reach distinct terminals $t,t',t'' $, respectively.}\\ 
		Recall that all three outer connected edges are in $S$. Now at most one (inner) connector-edge of the $x_i$-gadget is not in $S$, or else at least one pair of terminals among $\{(t,t'), (t',t''), (t'' ,t)\}$ would remain connected via the gadget. Consider Figure~\ref{fig:case5} for a visual depiction of this case. We replace one of the bases in $S$ with this connector-edge (if it is not already in $S$). The resulting edge multiway cut is no heavier. To see that it is also feasible, note that while $t, t', t''$ are still reached from the links of the gadget, all the connector-edges of this gadget are in the edge multiway cut. The terminals $x_i$ and $\overline{x_i}$ are disconnected from each other in $G-S'$ because one triangle-base and its connectors are still in the edge multiway cut. Hence, we obtain an edge multiway cut for $(G,T)$ satisfying Claim~\ref{clm:weight3edges} that has the same or less weight than $S$, but with strictly more connector-edges than $S$, a contradiction to our choice of $S$.
		
		\begin{figure}[t]
			\centering
			\includegraphics[width=\textwidth]{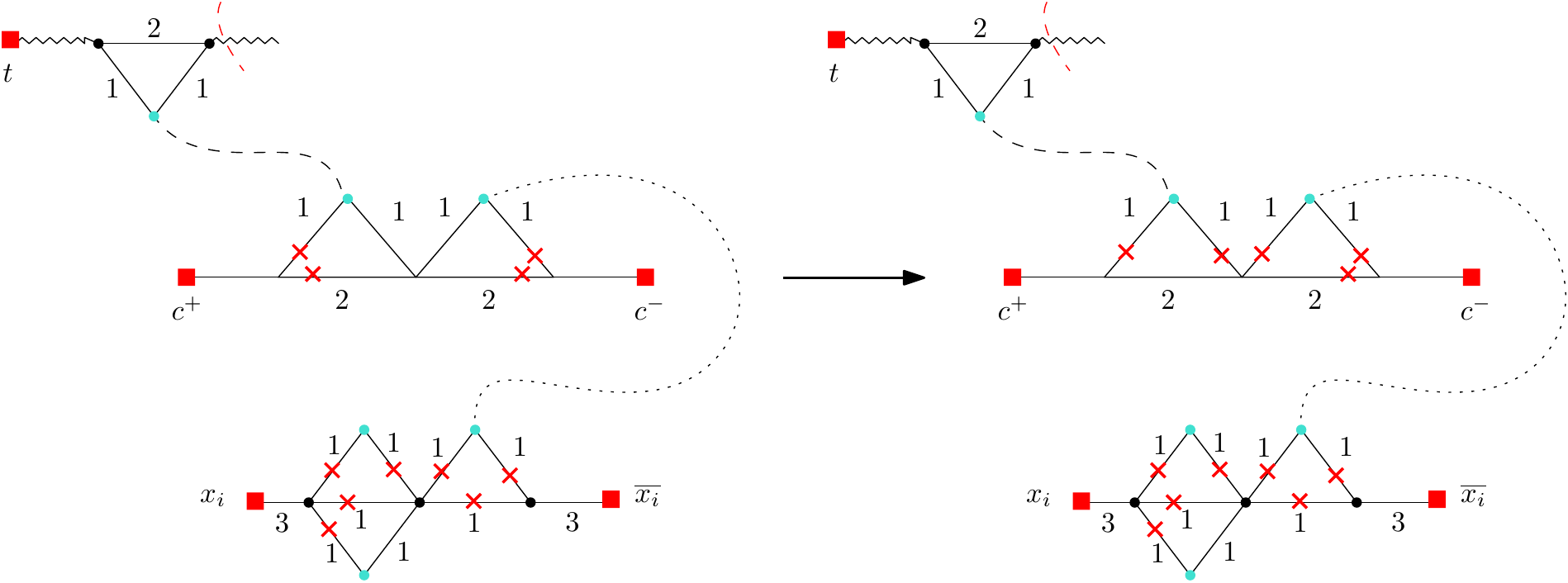}
			\caption{Case 6: The figure on the left shows the situation when the $x_i$-gadget reaches a terminal $t$ via a clause gadget of size two. The dotted curve in the figure indicates that its endpoints are identified whereas the dashed curve indicated that there exists a path between its endpoints that is not cut by $S$. The figure on the right depicts the situation after the replacement.}
			\label{fig:case6-1}
		\end{figure}
		\medskip
		\noindent \textsf{Case 6.} {\em At least two links of the $x_i$-gadget reach exactly one terminal $t$ outside the gadget.} \\
		Recall that every variable occurs in at least two clauses of size~$2$. Hence, $t$ is reachable via a link from the $x_i$-gadget to at least one directly linked clause gadget of a clause of size~$2$. Also recall that $S$ is a minimum-weight edge multiway cut containing the maximum number of bases from clauses of size~$2$. 
		
		Suppose that there exists a size-$2$ clause gadget $c$, directly linked to the $x_i$-gadget, that does not contain $t$ and via which $t$ is reachable from the $x_i$-gadget. 
		That is, some link reaches $t$ via a path $P$ that contains edges of $c$, but $t$ is not in $c$. Refer to Figure~\ref{fig:case6-1} for a visual depiction.
		Then $S$ must contain two base-connector pairs from $c$; else, some terminal of $c$ would not be disconnected from $t$ in $G - S$. Now remove from $S$ the base of one of the two triangles of $c$ and add the remaining two connector-edges of $c$. This does not increase the weight, as the base of the clause-triangle has weight~$2$ and the connectors have weight~$1$ each. The only terminal pair that could get connected by the transformation is the pair of terminals on $c$ itself. However, one of the bases is still in the transformed cut. This new cut contradicts our choice of $S$, as it has strictly more connector-edges and satisfies the other conditions.
		
		Suppose $t$ is contained in one of the size-$2$ clause gadgets $c'$, directly linked to the $x_i$-gadget. If the link between the $x_i$-gadget and $c'$ is not one of the links meant in the assumption of this case, then the situation of the previous paragraph holds and we obtain a contradiction.
		Thus, $t$ is reachable from the $x_i$-gadget via both links of $c'$. 
		Hence, a base-connector pair of the triangle of $c'$ that $t$ is not attached to must be in $S$. Consider the link of the $x_i$-gadget that is not linked to $c'$ but reaches $t$ and let $P$ be a corresponding path, starting at this link and going to $t$. Note that $P$ passes through a clause gadget $c''$ directly linked to the $x_i$-gadget. If $c''$ is a size-$2$ clause gadget, then we obtain a contradiction as before. Hence, $c''$ corresponds to a size-$3$ clause (as in Figure~\ref{fig:mwc-claimproof}). Since $P$ must either enter or leave $c''$ through one of its outer triangles, a base-connector pair of at least one outer triangle of $c''$ must be in $S$, or the attached terminal would reach $t$ in $G-S$, contradicting that $S$ is an edge multiway cut for $(G,T)$. Let $\Lambda$ be such an outer triangle (see Figure~\ref{fig:mwc-claimproof}).			

		\begin{figure}[tbp]
			\centering
			\includegraphics[width=0.65\linewidth]{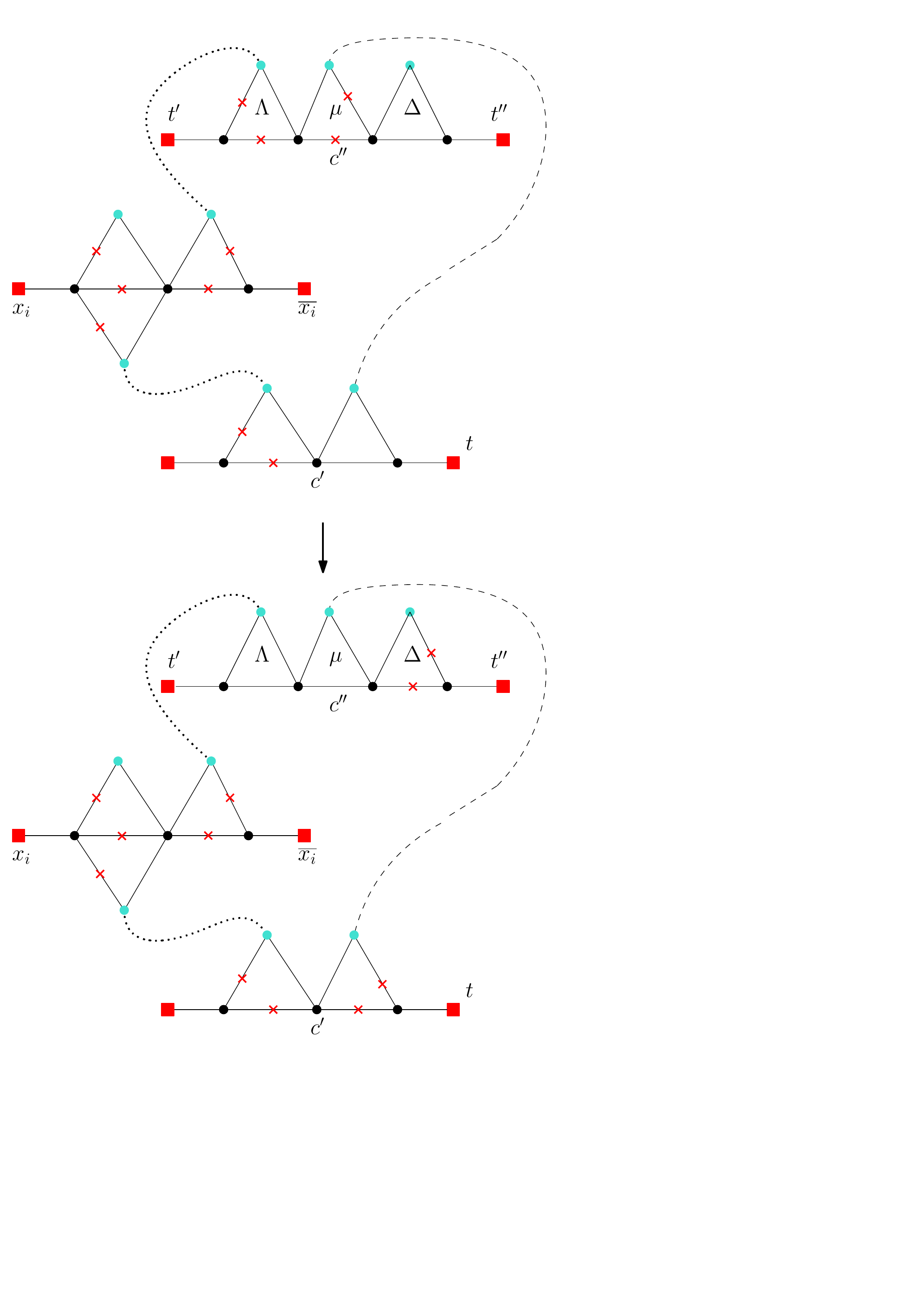}
			\caption{Case 6: In the top figure, there is a terminal $t$ reachable via (at least) two links of the $x_i$-gadget. Moreover, $t$ appears in a clause gadget $c'$ corresponding to a clause of size two that is directly connected to the $x_i$-gadget. The endpoints of the dotted curves are identified. The dashed curve indicates the existence of a path, not cut by $S$, between its endpoints.}
			\label{fig:mwc-claimproof}	
		\end{figure}	

		We argue that, without loss of generality, $S$ contains a base-connector pair of the other outer triangle, $\Delta$. Suppose not. Then, in particular, the base of $\Delta$ is not in $S$. If $P$ passes through the link attached to $\Delta$, then one of the endpoints of the base of $\Delta$ must be on $P$. Since the base of $\Delta$ is not in $S$, the terminal $t''$ next to $\Delta$ remains connected to $t$ in $G - S$, a contradiction. Hence, $P$ must either enter or exit $c''$ via the link attached to its middle triangle $\mu$. Moreover, $S$ must contain a base-connector pair of $\mu$ (see Figure~\ref{fig:mwc-claimproof}), or $t''$ would still reach $t$ in $G-S$. We now modify $S$ to obtain a set $S'$. If both connector-edges of $\Delta$ are in $S$, then replace the base of $\mu$ by the base of $\Delta$ to obtain $S'$. Then all edges of $\Delta$ are in $S'$. Otherwise, no edge of $\Delta$ is in $S$ and thus no terminal must be reachable via the link attached to $\Delta$ (or it would be connected to $t''$ in $G-S$). So, we replace the base-connector pair of $\mu$ by a base-connector pair of $\Delta$ to obtain $S'$. Then $S'$ is an edge multiway cut for $(G,T)$ of the same weight at $S$ that has the same properties as $S$. Hence, we may assume $S=S'$. Then $S$ contains a base-connector pair of $\Delta$.

		Now remove from $S$ the base and connector-edge of $\Lambda$. Then $t$ and $t'$ become connected to each other in $G-S$, but not to any other terminal, or that terminal would already be connected to $t$ in $G-S$. Now add the base and outer connector-edge of the triangle in $c'$ that $t$ is attached to. This restores that $S$ is an edge multiway cut for $(G,T)$. 
The edge multiway cut we obtain has the same weight as $S$ and satisfies Claim~\ref{clm:weight3edges}. Moreover, it has no less connectors than $S$ but contains at least one more base of a clause gadget of size~$2$. Hence, we obtain a contradiction to our choice of~$S$.
\end{claimproof}
	
	We now focus on the link-structures.

	\begin{claim}\label{clm:middletriangle}
		There cannot exist a link-structure in $G$ that contributes less than two edges to $S$ and for which the clause-triangle of the link-structure contributes no connector-edges to~$S$.
	\end{claim}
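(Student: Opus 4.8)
The plan is to argue by contradiction. Suppose there is a link-structure $L$, with link vertex $u$, variable-triangle $T_v$ and clause-triangle $T_c$, such that $S$ contains at most one of the four connector-edges of $L$ and neither of the two connector-edges of $T_c$.

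The first step is to observe that $T_c$ is then entirely disjoint from $S$: the three edges of $T_c$ form a cycle, its two connector-edges avoid $S$ by assumption, so if its base were in $S$ then $S$ would meet this cycle in exactly one edge, contradicting Claim~\ref{clm:cycle2edges}. Hence all of $T_c$ survives in $G-S$, and $u$ is joined in $G-S$ to both non-link vertices of $T_c$ and, through them, to the rest of the clause gadget containing $L$. The second step is equally short: since at most one of the two connector-edges of $T_v$ lies in $S$, $u$ is adjacent in $G-S$ to a non-link vertex of $T_v$, so the component $K$ of $G-S$ containing $u$ meets both the variable gadget and the clause gadget of $L$.

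The heart of the proof is to show that $K$ contains a terminal reached through the clause side and a terminal reached through the variable side; since every clause terminal is distinct from every variable terminal, this produces two distinct terminals in the same component of $G-S$ and contradicts that $S$ is an edge multiway cut for $(G,T)$. On the clause side I would split on the position of $T_c$ in its clause gadget. If $T_c$ is an end-triangle, then the weight-$3$ edge joining $T_c$ to a clause terminal is not in $S$ (recall $S$ contains no edge incident on a terminal, by Claim~\ref{clm:weight3edges} and the choice of $S$), so $u$ already reaches that terminal through the intact $T_c$. If $T_c$ is the middle triangle of a size-$3$ clause, I would follow $G-S$ through the two neighbouring triangles: if $S$ leaves an $S$-free path to a clause terminal on some side then $u$ reaches it (and if this happens on both sides the two clause terminals are connected through $u$ already, an immediate contradiction), whereas if $S$ blocks a side, then the base and connector-edge of $S$ doing the blocking admit an exchange for a base-connector pair elsewhere, yielding a multiway cut of the same weight with strictly more connector-edges, or with as many connector-edges but one more base of a size-$2$ clause gadget, contradicting the extremal choice of $S$. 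On the variable side I would invoke Claim~\ref{clm:onebasepervar}, that $S$ contains exactly one base of the variable gadget: if the non-link vertex of $T_v$ reached by $u$ were separated in $G-S$ from both variable terminals, then $S$ can be modified inside the variable gadget, by deleting redundant edges or by trading its base for a connector-edge, without losing feasibility, again contradicting the choice of $S$.

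I expect the last step to be the main obstacle, in particular the middle-triangle sub-case on the clause side together with the exchange arguments on both sides. For each proposed modification of $S$ one must verify that it is still a valid edge multiway cut of the same total weight and that it still satisfies the two tie-breaking conditions defining $S$ (maximum number of connector-edges, then maximum number of bases of size-$2$ clause gadgets); checking that no new terminal pair becomes connected after these surgeries is where essentially all of the case analysis lives.
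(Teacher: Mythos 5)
Your opening moves agree with the paper: $T_c$ is disjoint from $S$ by Claim~\ref{clm:cycle2edges}, no edge incident on a terminal is in $S$, and the split between ``$T_c$ is an end triangle'' (immediate variable-terminal--clause-terminal path, contradiction) and ``$T_c$ is the middle triangle of a size-$3$ clause gadget'' is exactly the paper's dichotomy. The gap is in the central case, and your framing of the ``heart'' of the proof hides it. When $T_c$ is the middle triangle, the reachability of a variable terminal $t$ through the intact $T_c$ and the feasibility of $S$ \emph{force} $S$ to contain the base and at least one connector-edge of each outer triangle of $c$; hence no clause terminal lies in the component $K$, and your plan of producing two distinct terminals in $K$ cannot succeed there. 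The contradiction must come from an exchange, and the exchange you sketch is not the right one: trading ``the blocking base and connector-edge for a base--connector pair elsewhere'' neither increases the number of connector-edges nor the number of size-$2$ clause bases (the middle triangle only occurs in size-$3$ gadgets, so the second tie-break is irrelevant here), and simply removing the blocking pair reconnects the clause terminal on that side to $t$ unless you specify what is added to re-separate them.

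The missing idea, which is essentially the whole content of the paper's argument, is the specific swap: delete one of the two outer-triangle bases (weight $2$) and add \emph{both connector-edges of the middle triangle} (weight $1+1$), giving an edge multiway cut of no greater weight with two more connector-edges, contradicting the extremal choice of $S$. Its feasibility rests on a preliminary observation your proposal does not contain: for each outer triangle of $c$, either both of its connector-edges are in $S$, or the link attached to it reaches no terminal $t'\neq t$ (otherwise $t$ or the adjacent clause terminal would already be connected to $t'$ in $G-S$). With that in hand, removing one outer base creates no new terminal--terminal connection, because the newly added middle-triangle connectors cut the route through the shared link towards the variable gadget, and the route through the affected outer triangle's own link leads to no other terminal. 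Your variable-side fallback (``modify $S$ inside the variable gadget if no variable terminal is reached'') is likewise unsubstantiated, but it is also unnecessary: the paper obtains reachability of a variable terminal directly from Claim~\ref{clm:onebasepervar} together with the hypothesis that the link-structure contributes at most one edge, necessarily on the variable side. As you concede, verifying that no new terminal pair becomes connected after the surgery is ``where essentially all of the case analysis lives'' --- and that verification, together with the correct choice of surgery, is precisely what is absent from the proposal.
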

	\begin{claimproof}
		Towards a contradiction, suppose that such a link-structure exists. Let the clause gadget containing the link-structure be $c$ and the variable gadget containing it be $x_i$. By Claim~\ref{clm:onebasepervar}, we know that there exists a triangle of the $x_i$-gadget that does not contribute its base to $S$. Therefore, at least one terminal $t$ of the $x_i$-gadget is reachable from the clause gadget $c$. This implies that the clause-triangle of the link-structure is the middle triangle of $c$; 
		else, there would exist a path in $G-S$ between $t$ and the closest clause-terminal on $c$, because the edge incident on this terminal is also not in $S$ by its properties. 
		Then, since $S$ is feasible, it must contain the base and at least one connector-edge of each of the two outer triangles of $c$. Else, at least one of the clause-terminals would be reachable from $t$ in $G-S$. 
		
		It must also be the case that both connector-edges of each of the outer triangles must be in $S$ or the incident link reaches no terminal $t' \not= t$; otherwise, $t$ or the incident clause-terminal would be connected to $t'$ in $G-S$. 
		Now, we can remove one of the two bases from $S$ and add the two connector-edges of the middle triangle, without compromising the feasibility of the edge multiway cut. Thus, there exists an edge multiway cut of no greater weight than $S$, satisfying Claim~\ref{clm:weight3edges}, and containing two more connector-edges (those of the clause-triangle of the link-structure). This is a contradiction to our choice of $S$.
	\end{claimproof}

	\begin{figure}
		\centering
		
		\includegraphics[width=0.35\textwidth]{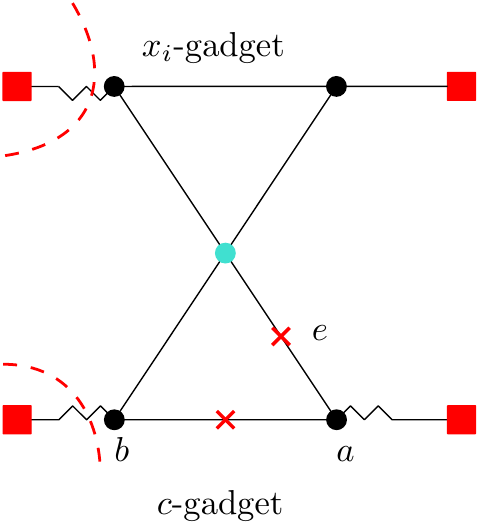}
		\caption{The figure depicts a link-structure with the variable gadget of $x_i$ at the top and its clause gadget for $c$ at the bottom. Exactly one edge of the link-structure (labeled with a red cross) is in the set $S$. The dashed red lines depict that the terminals cannot be reached from the vertices $a$ or $b$.}\label{fig:2perlink}
	\end{figure}

	\begin{claim}\label{clm:at least 2 per link}
		$S$ contains at least two edges from each link-structure.
	\end{claim}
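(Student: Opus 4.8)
The plan is to argue by contradiction: suppose some link-structure $L$ contributes at most one edge to $S$. First I would use Claim~\ref{clm:middletriangle} to pin down precisely which edge this is. If $L$ contributes no edge to $S$, or if the single edge it contributes is a connector-edge of its variable-triangle, then the clause-triangle of $L$ contributes no connector-edge to $S$ although $L$ contributes fewer than two edges, contradicting Claim~\ref{clm:middletriangle}. Hence $L$ contributes exactly one edge, and it is a connector-edge of its clause-triangle. Write $u$ for the link of $L$; let $\Delta_c$ denote the clause-triangle of $L$, lying in a clause gadget $c$, with connector-edges $e\in S$ and $e'\notin S$ and base $f$; and let $\Delta_v$ denote the variable-triangle of $L$, lying in a variable gadget $x_i$, with connector-edges $g,g'\notin S$ and base $h$.

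Next I would read off the local structure of $S$ from Claim~\ref{clm:cycle2edges}. Since $e\in S$ is an edge of the cycle $\Delta_c$ while $e'\notin S$, the base $f$ must lie in $S$; so $\Delta_c$ contributes exactly $\{e,f\}$ to $S$, of total weight $1+2=3$. Applying Claim~\ref{clm:cycle2edges} to $\Delta_v$: since neither of its connector-edges is in $S$, the base $h$ is not in $S$ either, as otherwise $h$ would be the only edge of the cycle $\Delta_v$ in $S$. Hence $\Delta_v$ is edge-disjoint from $S$ and survives intact in $G-S$.

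The crucial step is to deduce that $u$ reaches a terminal of the variable gadget $x_i$ in $G-S$. Since $\Delta_v$ survives, $u$ is joined in $G-S$ to both non-link vertices of $\Delta_v$. By Claim~\ref{clm:onebasepervar}, $S$ contains exactly one base-edge of the $x_i$-gadget, and since $h\notin S$ this base-edge lies in some triangle $\Delta^*\neq\Delta_v$. Now I would appeal to the layout of the variable gadget in Figure~\ref{fig:gadgets} (a diamond built from two triangles sharing a common base, and a hat triangle, with the positive terminal attached to the diamond and the negative one to the hat, each by a weight-$3$ edge). If $\Delta_v$ is one of the two diamond-triangles, then $\Delta^*$ cannot be the other one, because that triangle shares the base $h$ with $\Delta_v$ and $h\notin S$; hence $\Delta^*$ is the hat triangle, the diamond survives intact, and through the intact diamond $u$ reaches the positive terminal (whose weight-$3$ edge to the diamond is not in $S$). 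If instead $\Delta_v$ is the hat triangle, then the intact connector-edge of $\Delta_v$ at the non-link vertex carrying the weight-$3$ edge to the negative terminal shows that $u$ reaches the negative terminal. Either way, $u$ reaches a variable terminal of $x_i$ in $G-S$.

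Finally I would split on the endpoint $q$ of the surviving clause-connector $e'$, which is a vertex of $c$ joined to $u$ in $G-S$. If $q$ reaches a terminal $t$ of $c$ in $G-S$, then $t$ and the variable terminal from the previous step lie in one component of $G-S$; being distinct, they contradict that $S$ is a multiway cut for $(G,T)$. Otherwise $q$ reaches no terminal, and I would replace $S$ by $S':=(S\setminus\{e,f\})\cup\{g,g'\}$, so $w(S')=w(S)-3+2=w(S)-1$. Compared with $G-S$, the graph $G-S'$ only gains the edges $e$ and $f$ --- so $u$ and the two base-endpoints of $\Delta_c$ form a single component of $G-S'$, whose terminals are exactly those reached from these two endpoints (none via $q$, and at most one via the other, since $S$ is a cut) --- and $G-S'$ only loses $g,g'$ at $u$, which can only break components apart. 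Thus $S'$ is again a multiway cut for $(G,T)$ of strictly smaller weight, contradicting that $S$ has minimum weight. I expect the crucial step --- promoting ``$\Delta_v$ survives'' to ``$u$ reaches a variable terminal'' --- to be the main obstacle, as it is the one place where the concrete structure of the variable gadget must be combined with Claim~\ref{clm:onebasepervar}; a further delicate point is confirming that no component of $G-S'$ acquires two terminals, which requires tracking the reach of both base-endpoints of $\Delta_c$ and of $u$ itself.
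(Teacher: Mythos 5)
Your opening moves match the paper's: Claim~\ref{clm:middletriangle} forces the unique cut edge of the link-structure to be a clause connector $e$, Claim~\ref{clm:cycle2edges} then forces the clause base $f$ into $S$ and keeps the variable-triangle $\Delta_v$ disjoint from $S$, and from this the link $u$ reaches the variable terminal $t_v$ of the gadget (the paper gets this directly from the fact that $t_v$ is attached to a base-endpoint of $\Delta_v$ and no terminal edge is in $S$; your detour through Claim~\ref{clm:onebasepervar} and the assertion that ``the diamond survives intact'' is unnecessary and the latter is not actually guaranteed, but the conclusion you need is fine). The genuine gap is in your final exchange. Your case split on the endpoint $q=b$ of the surviving connector $e'$ is not exhaustive: the negation of ``$q$ reaches a terminal of $c$'' is not ``$q$ reaches no terminal''. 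The argument of your first case does extend to any terminal distinct from $t_v$, so the truly missing case is that $b$ reaches $t_v$ itself (and no other terminal) in $G-S$ --- and there your swap $S'=(S\setminus\{e,f\})\cup\{g,g'\}$ can fail to be a multiway cut. Cutting only the two connectors at $u$ blocks the route to $t_v$ through the link, but $b$ may reach $t_v$ by a path that re-enters the $x_i$-gadget through one of its other two links; that path survives in $G-S'$, while the restored edges $e,f$ merge $b$'s side with $a$'s side, and $a$ may reach a different terminal $t_a\neq t_v$ in $G-S$. Then $t_a$ and $t_v$ lie in one component of $G-S'$, so $S'$ is infeasible and no contradiction is obtained.

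This is exactly why the paper performs a different replacement: it removes the base--connector pair $\{e,f\}$ of the clause-triangle and adds, on the variable side, the base together with the \emph{outer} connector(s), i.e., the edges incident to the vertex at which the weight-$3$ terminal edge is attached. This isolates $t_v$'s attachment vertex and hence severs \emph{every} path to $t_v$, not only those through $u$, which is what makes the modified cut feasible even when $b$ reaches $t_v$ by a roundabout route. Note also a second consequence you lose with your swap: in the diamond case the paper's replacement is only weight-neutral (base plus two outer connectors have total weight $3$), so the contradiction there is not with minimality but with the secondary extremal choice of $S$ as having the maximum number of connector-edges; only in the hat case does the weight strictly drop. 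So to repair your proof you must (i) add the missing case ``$b$ reaches only $t_v$'', (ii) replace your exchange by one that cuts at the terminal's attachment vertex (base plus outer connector(s)), and (iii) invoke the connector-maximality of $S$ where the exchange is weight-neutral.
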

	\begin{claimproof}
	Suppose that there exists a link-structure $\ell$ that contributes less than two edges to $S$. Suppose that $\ell$ connects the clause gadget $c$ and the variable gadget $x_i$. By Claim~\ref{clm:middletriangle}, we know that the clause-triangle of $\ell$ must contribute an edge $e$ to $S$. Therefore, none of the connectors of the variable-triangle attached to $\ell$ are in $S$. As a result, the variable-terminal of the $x_i$-gadget attached to $\ell$, say we call it $t$, is reachable from $c$ via $\ell$.
		
	By Claim~\ref{clm:cycle2edges} and the fact that only $e$ is in $S$, the base of the clause-triangle must also be in $S$. We do the following replacement: remove from $S$ the base-connector pair of the clause-triangle and add the base and (possibly two) connectors of the variable-triangle of $\ell$, as follows. If the variable-triangle of $\ell$ is part of a diamond, then we add to $S$ the base and two outer connectors, thereby getting an edge multiway cut of equal weight but strictly more connectors. If the variable-triangle is a hat, then we add to $S$ the base and outer connector of the hat, obtaining an edge multiway cut for $(G,T)$ of strictly smaller weight than $S$. If we can show that the resultant edge multiway cut is feasible, we obtain a contradiction in either scenario. We claim that such a replacement does not compromise the feasibility of $S$.
		
	Let $a, b$ be the endpoints of the base of the clause-triangle of $\ell$, where $a$ is the endpoint on which $e$ is incident (see Figure~\ref{fig:2perlink}).
	Note that no terminal other than $t$ should be reachable in $G - S$ from $b$; else, there would be a path from $t$ to that terminal via $\ell$. In particular, the terminal of the clause gadget for $c$ on the side of $b$ cannot be reached in $G-S$ from the vertex $b$. By removing the base-connector pair of the clause-triangle of $\ell$, we may expose the clause-terminal on the side of the vertex $a$ (or another terminal outside $c$) to $t$. However, by adding the base and (possibly two) connectors closest to $t$, we disconnect any path between this terminal and $t$. Since we did not modify the cut in any other way, no new connections would have been made. This shows the feasibility of the resultant edge multiway cut and thus proves our claim. 		
	\end{claimproof}			
	
	\begin{claim}\label{clm:final}
		If there exists an edge multiway cut of weight at most $7n+2m$ for $(G,T)$, then there exists a satisfying truth assignment for $\Phi$.
	\end{claim}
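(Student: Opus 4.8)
First I would pin down the structure of $S$ by a weight count. In $G$ a variable-triangle base has weight~$1$ and a clause-triangle base has weight~$2$, and the variable-triangle bases, the connector-edges, and the clause-triangle bases form three pairwise disjoint edge sets (a connector-edge lies in exactly one link-structure and is never a triangle base). By Claim~\ref{clm:onebasepervar}, $S$ contains exactly one variable-triangle base per variable gadget, of total weight $n$. By Claim~\ref{clm:at least 2 per link}, $S$ contains at least two connector-edges from each of the $3n$ link-structures, of total weight at least $6n$. Inside every clause gadget there is a path between its two terminals all of whose non-terminal edges are triangle bases of that gadget, and by the choice of $S$ no terminal-incident edge lies in $S$; hence $S$ contains at least one clause-triangle base per clause, of total weight at least $2m$. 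Therefore $w(S)\ge n+6n+2m=7n+2m$, and since $w(S)\le 7n+2m$ all three bounds are tight and $S$ has no further edges. So $S$ contains exactly one variable-triangle base per variable gadget, exactly one clause-triangle base per clause gadget, and exactly two connector-edges per link-structure; call this \emph{tightness}.

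Next I would read off the assignment: set $\mathcal{A}(x_i)=1$ if $S$ contains the hat base of the $x_i$-gadget, and $\mathcal{A}(x_i)=0$ if $S$ contains the diamond base. As a diamond has a single base, shared by its two triangles, Claim~\ref{clm:onebasepervar} makes $\mathcal{A}$ well defined. Now fix a clause $c$; by tightness there is a unique triangle $\Delta$ of the $c$-gadget whose base lies in $S$; let $\lambda$ be its literal, say over $x_i$, let $u$ be the link of $\Delta$, and let $\Delta'$ be the variable triangle of the $x_i$-gadget glued to $\Delta$ at $u$ (a diamond triangle if $\lambda=x_i$, the hat if $\lambda=\overline{x_i}$). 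It suffices to prove the \emph{sub-claim}: $S$ does not contain the base of $\Delta'$. Indeed, if $\lambda=x_i$ then the base of $\Delta'$ is the diamond base, so the diamond base is not in $S$, so by Claim~\ref{clm:onebasepervar} the hat base is, so $\mathcal{A}(x_i)=1$; symmetrically, if $\lambda=\overline{x_i}$ then the hat base is not in $S$, so a diamond base is, so $\mathcal{A}(x_i)=0$. Either way $\lambda$, hence $c$, is satisfied, so $\mathcal{A}$ satisfies $\Phi$.

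To prove the sub-claim I would argue by contradiction: suppose $S$ contains the base of $\Delta'$. Since $S$ also contains the base of $\Delta$, Claim~\ref{clm:cycle2edges} applied to the triangles $\Delta$ and $\Delta'$ forces at least one connector of $\Delta$ and at least one connector of $\Delta'$ into $S$; these lie among the four edges of the link-structure at $u$, of which exactly two lie in $S$ by tightness, so exactly one connector $e_\Delta$ of $\Delta$ and exactly one connector $e_{\Delta'}$ of $\Delta'$ lie in $S$. Let $f_\Delta$ be the other connector of $\Delta$. Form $S^*$ from $S$ by deleting $e_{\Delta'}$ and the base of $\Delta'$, adding $f_\Delta$, and---if $\Delta'$ is a diamond triangle---also adding both connectors of the diamond's other triangle (at least one of which is already in $S$ by Claim~\ref{clm:cycle2edges}, since that triangle also contains the diamond base). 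Then $w(S^*)\le w(S)$, and if $w(S^*)=w(S)$ then $S^*$ has strictly more connector-edges than $S$; moreover $S^*$ contains no terminal-incident edge and, like $S$, avoids the edges forbidden by Claim~\ref{clm:weight3edges}, since every added edge is a weight-$1$ connector. Finally, $S^*$ is still an edge multiway cut for $(G,T)$: in $G-S^*$ the triangle $\Delta'$ is intact and $u$ is adjacent to no vertex of $\Delta$, while---in the diamond case---the added connectors also detach $\Delta'$ from the diamond's other triangle; consequently the component of $u$ in $G-S^*$ is confined to the $x_i$-gadget and contains at most the variable-terminal attached there, hence at most one terminal; and because every modification touches only edges incident with this component, $G-S^*$ agrees with $G-S$ on all edges with both endpoints outside it, so every other component of $G-S^*$ still contains at most one terminal. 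Thus $S^*$ contradicts the choice of $S$ (a minimum-weight cut, and then one with the maximum number of connector-edges), which establishes the sub-claim and with it the claim.

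The step I expect to be the main obstacle is the last one: verifying that, after the exchange, the component of $u$ is genuinely sealed inside the $x_i$-gadget. This is exactly what forces the exchange, in the diamond case, to move \emph{both} connectors of the diamond's second triangle into $S^*$---otherwise the component could escape through that triangle into a neighbouring clause gadget and pick up a second terminal---and making it rigorous requires the same kind of tracing through link-structures, and the same use of the extremal choice of $S$, as in Case~6 of the proof of Claim~\ref{clm:onebasepervar}.
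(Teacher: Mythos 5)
Your first two paragraphs are fine and match the paper: the tightness count (exactly one variable-gadget base, exactly one clause-gadget base per clause, exactly two connector-edges per link-structure, nothing else) and the reduction of the claim to the sub-claim ``the variable-triangle glued to the cut clause-triangle does not have its base in $S$'' are exactly the paper's argument. The gap is in your proof of the sub-claim. Your exchange set $S^*$ is not an edge multiway cut. By tightness, the base of $\Delta'$ is the \emph{only} base of the $x_i$-gadget in $S$, and you delete it without putting the other base of that gadget back in. But inside a variable gadget the two terminals $x_i$ and $\overline{x_i}$ are joined by a path consisting of the two weight-$3$ terminal edges, the diamond base and the hat base (the ``spine'' through the shared vertex of the diamond and the hat); this path uses no connector-edges at all. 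Since $S$ contains no terminal-incident edges and, after your deletion, $S^*$ contains no base of this gadget, that path survives in $G-S^*$, so $x_i$ and $\overline{x_i}$ lie in the same component. The edges you add ($f_\Delta$, and in the diamond case the connectors of the diamond's other triangle) are all connector-edges and do not touch this path. So your claim that the component of $u$ ``contains at most the variable-terminal attached there'' fails in both the hat and the diamond case; the danger is not, as you suspected, an escape through the diamond's second triangle into a neighbouring clause gadget, but an internal reconnection of the gadget's own terminal pair. This is precisely why the paper's replacement arguments (e.g.\ Cases 3--5 in the proof of Claim~\ref{clm:onebasepervar}) always keep one base of the variable gadget together with its outer connector(s) in the cut whenever they discard another.

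The paper's own proof of this step needs no exchange at all, and you could repair your argument by adopting it: assuming both the base of $\Delta$ and the base of $\Delta'$ lie in $S$, Claim~\ref{clm:cycle2edges} plus tightness forces exactly one connector of $\Delta$ and exactly one connector of $\Delta'$ of the link-structure into $S$, leaving one uncut connector on each side of the link $u$. Since the gadgets contain no other cut edges (only one base per gadget is in $S$, terminal edges are never in $S$, and the remaining spine bases of both gadgets are uncut), these two uncut connectors extend through the gadgets to a path in $G-S$ between a variable terminal of the $x_i$-gadget and a terminal of the clause gadget, contradicting that $S$ is an edge multiway cut for $(G,T)$ --- a contradiction with feasibility of $S$ itself, rather than with its extremal choice.
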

	\begin{claimproof}
		Let $S$ be the edge multiway cut defined before. The immediate consequence of Claims~\ref{clm:onebasepervar} and~\ref{clm:at least 2 per link} is that the weight of $S$ is at least $n + 2\cdot(3n)= 7n$. $S$ must also contain at least one base per clause gadget lest the two terminals on a clause gadget remain connected. Therefore, its weight is at least $7n+2m$. Since it is an edge multiway cut of weight at most $7n+2m$, it has exactly one base per clause gadget. 
		
		We also claim that for each link-structure, if one of the triangles attached to it has its base in $S$, then the other one cannot: note that if both the triangles had their bases in $S$, then each of them would also have a connector-edge in $S$ by Claim~\ref{clm:cycle2edges}. By Claim~\ref{clm:at least 2 per link} and the assumption that the weight of $S$ is at most $7n+2m$, the other two connector-edges of the link-structure are not in $S$. Since at most one base per variable/clause gadget can be in $S$, there would be a path between one of the variable-terminals and one of the clause-terminals in the linked gadgets through the link-structure, a contradiction to $S$ being an edge multiway cut for $(G,T)$. Figure~\ref{fig:oneb/ls} shows one such case.

		\begin{figure}
			\centering
			\includegraphics[width=0.55\textwidth]{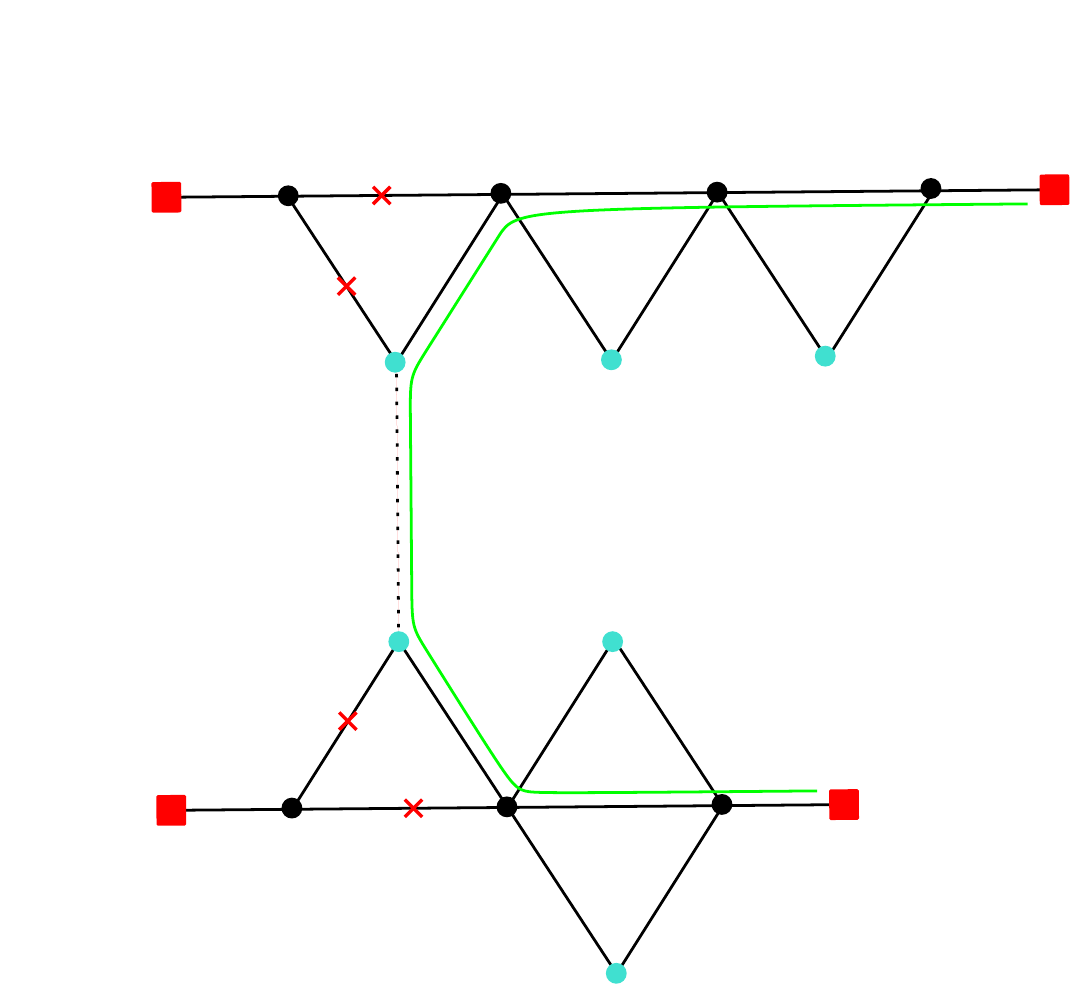}
			\caption{The figure shows a link-structure with the variable gadget at the bottom and its connected clause gadget at the top. The crossed-out red edges are the ones contained in the minimum edge multiway cut $S$. The green curve shows the existence of a path between a variable-terminal and a clause-terminal. The dotted curve connects the identified connectors in the link-structure shown in the figure.}\label{fig:oneb/ls}
		\end{figure}

		We now define the truth assignment $\mathcal{A}$. For each variable-terminal, if the diamond has its base in $S$, we make it ``false'', otherwise if the hat has its base in $S$ we make it ``true". Each clause gadget has exactly one triangle contributing its base to $S$. From the above argument, we know that the variable-triangle linked to this clause-triangle must not contribute its base to $S$. Hence, every clause gadget is attached to one literal triangle such that its base is not in $S$, and is therefore ``true''. Hence, every clause is satisfied by the truth assignment $\mathcal{A}$ and $\Phi$ is a  {\sc yes} instance of {\sc Planar 2P1N-3SAT}.
	\end{claimproof}
	
	The above implies that {\sc $\{1,2, 3, 6\}$-Edge Multiway Cut} is \NP-complete on planar subcubic graphs. We now proceed to prove that (unweighted) {\sc Edge Multiway Cut} is \NP-complete on planar subcubic graphs. The proof follows from the
claim below, which states
that the honeycombs of $\tilde{G}$ (defined before) do not contribute any edge to any minimum edge multiway cut for ($\tilde{G},T$).

	\begin{restatable}{claim}{honeycomb}\label{clm:nohoney}
		Any minimum edge multiway cut for $(\tilde{G},T)$ does not contain any of the honeycomb edges.
	\end{restatable}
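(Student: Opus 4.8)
The plan is to prove the contrapositive: if $S$ is an edge multiway cut for $(\tilde{G},T)$ that uses at least one honeycomb edge, then $S$ is not of minimum size. Recall that every honeycomb $H$ in $\tilde{G}$ replaced a \emph{non}-terminal vertex $v$ of $G'$ of degree $d\le 6$, so no terminal lies in any honeycomb, and $H$ communicates with the rest of $\tilde{G}$ only through its $d$ attaching edges $e_1,\dots,e_d$, which join vertices $u_1,\dots,u_d$ outside $H$ to the attachment points $p_1,\dots,p_d$ on the boundary of $H$. Fix a honeycomb $H$ with $S_H:=S\cap E(H)\ne\emptyset$, and let $\mathcal{P}$ be the partition of $\{p_1,\dots,p_d\}$ in which $p_i,p_j$ lie in the same block exactly when they lie in the same connected component of $H-S_H$.

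The first ingredient is a rerouting lemma: if $A\subseteq\{e_1,\dots,e_d\}$ satisfies ``$e_i\in A$ or $e_j\in A$ whenever $p_i,p_j$ lie in distinct blocks of $\mathcal{P}$'', then $S':=(S\setminus S_H)\cup A$ is again an edge multiway cut for $(\tilde{G},T)$. Indeed, since $A$ is disjoint from $E(H)\supseteq S_H$, any path in $\tilde{G}-S'$ between distinct terminals $t,t'$ must use an edge of $S\setminus S'=S_H$, hence it enters $V(H)$; because no terminal lies in $V(H)$, each maximal subpath inside $V(H)$ runs from some $p_i$ to some $p_j$ with $i\ne j$, entered and left through $e_i,e_j\notin A$, so by the hypothesis on $A$ the points $p_i,p_j$ lie in the same block of $\mathcal{P}$ and the subpath can be replaced by a $p_i$--$p_j$ path inside $H-S_H$. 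Performing these replacements turns the original path into a $t$--$t'$ walk avoiding $S$, contradicting that $S$ is a multiway cut.

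Now let $P^\ast$ be a block of $\mathcal{P}$ of maximum cardinality, lying in the component $D^\ast$ of $H-S_H$, and set $A:=\{e_i : p_i\notin P^\ast\}$, so $|A|=d-|P^\ast|$. Since $P^\ast$ is a single block, $A$ satisfies the lemma's hypothesis, so $S'=(S\setminus S_H)\cup A$ is an edge multiway cut; as $A$ contains no honeycomb edge, $S'$ has strictly fewer honeycomb edges than $S$. It remains to show $|S'|<|S|$, for which it suffices to prove $|S_H|\ge |A|+1$. Let $D_1,\dots,D_m$ be the components of $H-S_H$ other than $D^\ast$ that contain an attachment point, with $a_j\ge 1$ attachment points in $D_j$, so $\sum_j a_j=|A|$. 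Counting edge-endpoints, $2|S_H|=\sum_{D}|\partial_H D|$ over all components $D$ of $H-S_H$, hence $2|S_H|\ge|\partial_H D^\ast|+\sum_{j=1}^m|\partial_H D_j|$. Since $H$ is $2$-edge-connected and, in the interesting case where $\mathcal{P}$ is nontrivial, each of $D^\ast,D_1,\dots,D_m$ is a nonempty proper vertex set, every one of these boundaries has size at least $2$; moreover, whenever $a_j\ge 2$ the component $D_j$ is connected, is not the largest component, and contains two attachment points, which by construction are at least $100$ cells apart on the boundary of the $1000\times1000$ honeycomb, and a standard isoperimetric estimate for the hexagonal grid then shows that $|\partial_H D_j|$ is far larger than $12\ge 2a_j$. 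Therefore $|\partial_H D_j|\ge 2a_j$ for every $j$, so $2|S_H|\ge 2+2\sum_j a_j=2+2|A|$, i.e.\ $|S_H|\ge|A|+1$, and hence $|S'|\le|S|-1<|S|$, contradicting minimality. (When $\mathcal{P}$ is the trivial one-block partition, $A=\emptyset$, and the same incidence count with $D^\ast=H-S_H$ already yields $|S_H|\ge 1$, so $S'=S\setminus S_H$ is a strictly smaller multiway cut by the lemma.)

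The main obstacle is the isoperimetric step: showing that a single component of $H-S_H$ absorbing two or more attachment points, unless it is the ``giant'' component, must have large edge-boundary -- equivalently, that the $1000\times1000$ honeycomb admits no small-boundary partition into two substantial parts and that any connected set with small edge-boundary has small diameter or small complement. Here the generous parameters of the construction (a $1000\times1000$ grid, attachment points $100$ cells apart) are exactly what provides the slack; the rerouting lemma and the incidence count are routine. It is worth stressing that the argument never bounds $|S_H|$ from above, which is essential since a minimum edge multiway cut of $\tilde{G}$ may be much larger than the edge-connectivity parameters of any single honeycomb.
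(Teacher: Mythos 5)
Your argument is sound and takes a genuinely different route from the paper's. The paper works in the planar dual of $\tilde{G}$: a minimum edge multiway cut dualizes to a collection of cycles, each face of which encloses exactly one terminal; restricted to a honeycomb, the proof shows every bounded face must enclose exactly one attachment point and that faces around consecutive attachment points are edge-disjoint, which forces at least $2\deg_{G'}(v)$ honeycomb edges in the cut, and these are then swapped for the at most eight attaching edges. You argue primally instead: the partition of attachment points by the components of $H-S_H$, the rerouting lemma, and the choice of the maximum block $P^\ast$ let you trade $S_H$ for only the attaching edges outside $P^\ast$, and the endpoint count $2|S_H|\ge\sum_D|\partial_H D|$ plus $2$-edge-connectedness of the honeycomb finishes every case except when some component other than $D^\ast$ swallows two or more attachment points. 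What your route buys is a feasibility argument with no dual machinery and a sharper exchange (you add only $|A|=d-|P^\ast|$ attaching edges rather than all of them); what the paper's dual picture buys is precisely an explicit argument for the quantitative step you leave as an assertion, namely that a component $D_j\neq D^\ast$ with $a_j\ge 2$ has edge boundary well above $2a_j$ (in the paper this appears as: a dual face enclosing two attachment points has boundary length at least $50$, versus at most eight attaching edges).

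Two points should be tightened, though neither is fatal. First, the justification for the isoperimetric step should not be that $D_j$ ``is not the largest component'': $D^\ast$ maximizes the number of attachment points, not the number of vertices, so $D_j$ could well be the vertex-wise largest component. What you actually need, and do have, is that the complement of $D_j$ contains the connected set $D^\ast$, which, since $|P^\ast|\ge a_j\ge 2$, itself contains two attachment points at distance on the order of $50$ or more inside $H$. With that in hand the estimate can be made honest, e.g., by decomposing $\partial_H D_j$ in the dual of $H$ into chords through the outer face and internal cycles: if $|\partial_H D_j|\le 12$, each such curve has length at most $12$, and since the honeycomb is fat and essentially convex, a short chord has its two boundary endpoints close along the boundary, so each curve cuts off a region of diameter $O(12)$; hence at most one region of $H-\partial_H D_j$ has large diameter, contradicting that both $D_j$ and the region containing $D^\ast$ must. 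Second, a non-terminal vertex of $G'$ can have up to eight attaching edges (total incident weight up to $8$, realized as parallel edges), not six; this only changes your bound $a_j\le 4$ and $2a_j\le 8$, so all your inequalities still hold with room to spare.
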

	
\begin{claimproof}
	Let $S'$ be a minimum edge multiway cut for $(\tilde{G},T)$. Recall that $\tilde{G}$ is planar. Note that for any two vertices $s,t$, an $s$-$t$ cut in a planar graph corresponds to a simple (possibly degenerate) cycle in the planar dual~\cite{Re83}. Therefore, the dual of an edge multiway cut comprises several cycles. Let the edges corresponding to $S'$ in the planar dual of $\tilde{G}$ be $S^*$. In fact, $S^*$ induces a planar graph such that exactly one terminal of~$T$ is embedded in the interior of each face of this graph. If any face of $S^*$ did not contain a terminal, we could remove the edge in $S'$ dual to one of the edges of this face. This would not connect any terminal pair, and hence contradicts the minimality of $S'$.
	
	Suppose that $S'$ contains some of the edges of the honeycomb in $\tilde{G}$ replacing the vertex $v \in V(G')$. We denote the intersection of  $S'$ with the edges of this honeycomb by $S'_h$. Let the set of edges dual to $S'_h$ in be  $S^*_h$. By abuse of notation, we also denote by $S^*_h$ the graph formed by contracting all the edges in $S^* \setminus S^*_h$. Since each face of $S^*$ encloses a terminal, each bounded face of $S^*_h$ must enclose an attachment point of the honeycomb. If not, then we could remove from $S'$ an edge in $S'_h$ dual to some edge of the face of $S^*_h$ not enclosing an attachment point. This does not make any new terminal-to-terminal connections, as the part of the honeycomb enclosed by this face does not contain any path to any of the terminals of~$T$. This would be a contradiction to the minimality of $S'$.  
	
	Next, we observe that no bounded face of $S^*_h$ can enclose more than one attachment point. Suppose that there exists a bounded face in $S^*_h$ that encloses two attachment points. Since the two attachment points are separated by 100 cells of the honeycomb, the length of the face boundary must be at least 50. We could remove all the 50 edges from $S'$ dual to the edges of the face boundary and add all the attaching edges to $S'$, instead. All the terminal-to-terminal paths passing through the honeycomb will remain disconnected after the transformation. Since at most eight attaching edges can be added, we again get a contradiction to the minimality of $S'$. So, each bounded face of $S^*_h$ must enclose exactly one attachment point. 
	
	To enclose the attachment points, each of these faces must cross the boundary of the honeycomb exactly twice. We claim that the faces of $S^*_h$, enclosing consecutive attachment points on the boundary of the honeycomb, are pairwise edge-disjoint. Suppose that the faces enclosing two consecutive attachment points, $a$ and $a'$, share an edge. Then, they must also share an edge that crosses the boundary of the honeycomb. If they do not, then let $e$ be the last edge of the face enclosing $a$ to cross the boundary and $u'$ be the first edge of the face enclosing $a'$ to cross the boundary of the honeycomb. The edges $e$ and $e'$ along with the other edges not shared between the respective face boundaries bound a region of the plane containing no attachment points, a contradiction! 
	
	Therefore, any two faces of $S^*_h$ enclosing consecutive attachment points share an edge which crosses the boundary of the honeycomb. Without loss of generality, let this edge be closer to $a$. Then, the face enclosing $a'$ must contain at least 50 edges as $a$ and $a'$ are separated by 100 cells of the honeycomb. This implies that $S'_h$ contains at least 50 edges. However, we could remove from it all the 50 edges and add all the (at most eight) attaching edges. This cut is smaller in size and disconnects all the terminal-terminal paths passing through the honeycomb. Once again, we contradict the minimality of $S'$.
	
	Hence, all the faces in $S^*_h$ enclosing attachment points are edge-disjoint. So, there are at least $2\cdot \operatorname{deg}_{G'}(v)$ edges in $S'_h$. We could replace this cut by a smaller cut, namely, the edge multiway cut formed by removing the edges in $S'_h$ from $S'$ and adding to it all the attaching edges incident on the attachment points. This cut disconnects all terminal-paths passing through the honeycomb and yet, is smaller in size than $S'$, a contradiction to its minimality. Hence, $S'$ does not contain any edge of any of the honeycombs.
\end{claimproof}	
	
	By the construction of $\tilde{G}$ and Claims~\ref{clm:forward}, \ref{clm:final}, and~\ref{clm:nohoney}, we conclude that {\sc Edge Multiway Cut} is \NP-complete on planar subcubic graphs.
\end{proof}

\section{The Proof of Theorem~\ref{thm:NMwC:C2}}\label{s-2}

In this section we prove Theorem~\ref{thm:NMwC:C2}. We start with the following observation.

\begin{restatable}{proposition}{NMWCprop}\label{prp:NMWC4}
	{\sc Node Multiway Cut} is \NP-complete for planar graphs of maximum degree~$4$.
\end{restatable}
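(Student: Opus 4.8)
The plan is to reduce from {\sc Edge Multiway Cut} on planar subcubic graphs, which is \NP-complete by Theorem~\ref{thm:MwC:C2}, using the standard line-graph transformation. Given an instance $(G,T,k)$ of {\sc Edge Multiway Cut} with $G$ planar and subcubic, I would build the line graph $L(G)$: each vertex of $L(G)$ corresponds to an edge of $G$, and two such vertices are adjacent exactly when the corresponding edges of $G$ share an endpoint. Since $G$ has maximum degree~$3$, an edge of $G$ shares an endpoint with at most $2+2=4$ other edges, so $L(G)$ has maximum degree~$4$. Planarity is preserved by the line graph of a planar graph of maximum degree at most~$3$ (one can draw $L(G)$ by placing each new vertex on the corresponding edge of a planar embedding of $G$ and routing the adjacency edges within small neighbourhoods of the vertices of $G$); I would state this with a short embedding argument.

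Next I would fix the terminal set of the new instance. For each terminal $t\in T$ in $G$, I would handle degree carefully: the cleanest approach is first to assume (without loss of generality, by adding a pendant edge if necessary) that every terminal of $G$ has degree exactly~$1$, so that each terminal $t$ corresponds to a unique edge $e_t$ of $G$; then the terminal set in $L(G)$ is $T':=\{e_t : t\in T\}$. Alternatively, one attaches a new degree-$1$ vertex to each terminal and makes that the terminal — this keeps $G$ subcubic and does not change the answer. With this normalisation in place, an edge $e_t$ incident on a terminal is adjacent in $L(G)$ only to the (at most one) other edge at the non-terminal endpoint of $e_t$, so the terminal vertices of $L(G)$ have degree at most~$1$; in particular they are never forced to be in a minimum node multiway cut, and the ``deletable'' versus ``undeletable'' distinction is irrelevant here.

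The correctness argument is the heart of the reduction and is the step I expect to require the most care. The key observation is the classical correspondence between edge cuts in $G$ and vertex cuts in $L(G)$: a set $S\subseteq E(G)$ disconnects all pairs of terminals in $G$ if and only if the corresponding vertex set $S'\subseteq V(L(G))$ disconnects all pairs of terminal-vertices in $L(G)$, because a walk in $G$ from edge to edge corresponds to a walk in $L(G)$ through the associated vertices and vice versa. Concretely, a path in $G$ between two terminals $t_1,t_2$ yields a path in $L(G)$ between $e_{t_1}$ and $e_{t_2}$ whose internal vertices are exactly the intermediate edges of the $G$-path; and conversely a path in $L(G)$ between two terminal-vertices, after discarding possible repeated vertices, traces out an edge set in $G$ containing a $t_1$--$t_2$ path. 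Hence $(G,T)$ has an edge multiway cut of size at most~$k$ iff $(L(G),T')$ has a node multiway cut of size at most~$k$, and since the terminal-vertices are non-adjacent to everything except at most one neighbour, we may take this node multiway cut inside $V(L(G))\setminus T'$.

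Finally I would note that $L(G)$ can clearly be constructed in polynomial time and that {\sc Node Multiway Cut} is in \NP\ (a cut can be verified in polynomial time by checking connectivity of $G-S$), so the reduction establishes \NP-completeness for planar graphs of maximum degree~$4$, completing the proof of Proposition~\ref{prp:NMWC4}. The main obstacle, as indicated, is giving a clean and fully rigorous treatment of the planarity of $L(G)$ together with the terminal-degree normalisation; the cut-correspondence itself is standard but must be stated precisely enough to transfer the size bound exactly.
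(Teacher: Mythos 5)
Your overall route (line graph of a hard {\sc Edge Multiway Cut} instance) is viable in principle --- it is essentially how the paper proves the stronger Theorem~\ref{thm:NMwC:C2} --- but as written there is a genuine gap, and it sits exactly in the step you flag as needing care. In the hard instances (and in general for the restricted problem you reduce from) terminals of $G$ have degree~$3$, so attaching a pendant vertex to a terminal does \emph{not} ``keep $G$ subcubic'': the old terminal now has degree~$4$, and in the line graph any edge joining it to a degree-$3$ neighbour becomes a vertex of degree $4+3-2=5$. So the graph you construct has maximum degree~$5$, not~$4$, and the proposition is not established. Relatedly, the terminal vertex $e_t$ of $L$ has degree $\deg_G(t)\le 3$, not at most~$1$ as you claim. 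The normalisation itself is also not ``without loss of generality'' at the {\sc Edge Multiway Cut} level: if two degree-$3$ terminals are joined by three internally disjoint paths, the optimum is~$3$, but after relocating the terminals to pendant vertices it drops to~$1$ (cut one pendant edge). The reduction could still be repaired by arguing the equivalence directly with the \emph{original} instance, using that the pendant-edge vertices are undeletable terminals of $L$ and hence the forbidden cheap cut is unavailable --- but you never invoke this, and it does not fix the degree bound. Note also that dropping the pendants and simply picking one incident edge per terminal as the terminal of $L(G)$ does not work either: for the path $t$--$v$--$u$ with both ends terminals, $(G,T)$ has an edge multiway cut of size~$1$, while $L(G)=K_2$ with both vertices terminals has no node multiway cut at all. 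The paper circumvents exactly these issues in Theorem~\ref{thm:NMwC:C2} by first taking a $2$-subdivision (Lemma~\ref{lem:Mwc:C3}), which both enables the cut-shifting argument away from terminal cliques and brings the line graph's degree down to~$3$.

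For Proposition~\ref{prp:NMWC4} itself the paper does something much lighter: it reduces from {\sc Node Multiway Cut with Deletable Terminals} on planar subcubic graphs (Theorem~\ref{thm:NMwCDT:C2}, via {\sc Vertex Cover}) by attaching a pendant terminal to each old terminal; the pendant attachment is harmless there because deleting a vertex of $T$ in the original instance corresponds to deleting its (non-terminal) neighbour in the new one, and the degree only rises to~$4$. If you want to keep your line-graph approach, you should either incorporate the subdivision step (at which point you have proved the stronger subcubic statement) or rework the pendant argument so that the intermediate graph's degree-$4$ vertices do not propagate degree~$5$ into the line graph.
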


\begin{proof}
	It is readily seen that {\sc Node Multiway Cut} belongs to \NP. We now reduce from {\sc Node Multiway Cut with Deletable Terminals} on planar subcubic graphs. Let $(G,T,k)$ be an instance of this problem. Let $G'$ be obtained from $G$ by adding a pendant vertex $v'$ per vertex $v \in T$. Let $T' = \{v' \mid v \in T\}$. If $(G',T')$ has a node multiway cut $S \subseteq V(G') \setminus T'$, then $S$ is immediately a node multiway cut for $(G,T)$. Conversely, if $(G,T)$ has a node multiway cut $S \subseteq V(G)$, then $S$ is immediately a node multiway cut for $(G',T')$ with $S \subseteq V(G') \setminus T'$. The result follows.
\end{proof}

\noindent
We also need the following lemma from Johnson et al.~\cite{JMOPPSV} 
(the proof of this lemma can also be found in the appendix).

 \begin{restatable}{lemma}{EMWClem}\label{lem:Mwc:C3}
	If {\sc Edge Multiway Cut} is \NP-complete for a class $\mathcal{H}$ of graphs, then it is also \NP-complete for the class of graphs consisting of the $1$-subdivisions of the graphs of~$\mathcal{H}$.
\end{restatable}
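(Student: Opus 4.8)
The plan is a direct, budget-preserving polynomial-time reduction from {\sc Edge Multiway Cut} on the class $\mathcal{H}$ to {\sc Edge Multiway Cut} on the $1$-subdivisions of the graphs in $\mathcal{H}$. Membership in \NP\ is immediate, so it suffices to give the reduction. Given an instance $(G,T,k)$ with $G\in\mathcal{H}$, let $G'$ be the $1$-subdivision of $G$: every edge $uv\in E(G)$ is replaced by a path $u\,w_{uv}\,v$ through a new vertex $w_{uv}$ of degree~$2$. I would keep the terminal set equal to $T$ (note $T\subseteq V(G)\subseteq V(G')$) and the budget equal to $k$. This construction is clearly polynomial, and it preserves planarity and does not increase the maximum degree (the new vertices have degree~$2$), which is exactly what makes the lemma combine cleanly with Theorem~\ref{thm:MwC:C2}.

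The key observation driving correctness is that each subdivision vertex $w_{uv}$ has degree exactly~$2$ in $G'$, so any path in $G'$ visiting $w_{uv}$ must traverse both edges $u w_{uv}$ and $w_{uv}v$. Hence deleting either one of these two edges has the same effect on reachability among the original vertices as deleting $uv$ in $G$, and deleting both is never more useful than deleting one. I would make this precise in two directions. Forward: from an edge multiway cut $S\subseteq E(G)$ for $(G,T)$ with $|S|\le k$, build $S'$ by choosing, for each $uv\in S$, one of the two edges $u w_{uv},w_{uv}v$; then $|S'|=|S|\le k$, and a path between two terminals in $G'-S'$ would project to a walk between them in $G-S$ whose edges all avoid $S$ (an edge $uv$ of the walk would have both halves present in $G'-S'$), contradicting that $S$ is a multiway cut. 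Converse: from an edge multiway cut $S'\subseteq E(G')$ for $(G',T)$ with $|S'|\le k$, first argue we may assume $S'$ contains at most one of $\{u w_{uv},w_{uv}v\}$ for every $uv\in E(G)$ — if it contains both, removing one from $S'$ keeps it a multiway cut, since $w_{uv}$ is a non-terminal degree-$2$ vertex, so re-attaching it to a component creates no new terminal--terminal path. Then set $S=\{uv\in E(G):S'\cap\{u w_{uv},w_{uv}v\}\ne\emptyset\}$; we get $|S|\le|S'|\le k$, and the component structure of $G-S$ restricted to $V(G)$ matches that of $G'-S'$ restricted to $V(G)$, so $S$ pairwise separates the terminals.

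I do not expect a real obstacle here; the only points needing care are the ``without loss of generality'' normalisation in the converse (handling the case when both halves of an edge lie in $S'$) and stating the reachability correspondence cleanly — most economically as: for original vertices $x,y$ there is an $x$--$y$ path in $G'-S'$ if and only if there is one in $G-S$, which is forced by the degree-$2$ property of the subdivision vertices. Finally, applying the reduction with $\mathcal{H}$ the class of planar subcubic graphs (\NP-complete for {\sc Edge Multiway Cut} by Theorem~\ref{thm:MwC:C2}) and observing that $1$-subdivisions of planar subcubic graphs are again planar and subcubic delivers the property needed for the $\mathcal{S}$-classification, and iterating the construction produces $\ell$-subdivisions for unboundedly large $\ell$.
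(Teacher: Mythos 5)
Your reduction is exactly the one the paper uses: form the $1$-subdivision, keep $T$ and $k$, and translate cuts back and forth by replacing an original edge with one of its two halves (and conversely a half-edge with its original edge), which preserves size and separation. The argument is correct and essentially identical to the paper's proof, just spelled out in more detail (e.g.\ the normalisation when both halves of an edge lie in the cut, which the paper handles implicitly since both halves map to the single original edge).
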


\begin{figure}[t]
	\centering
	\includegraphics[width=\textwidth]{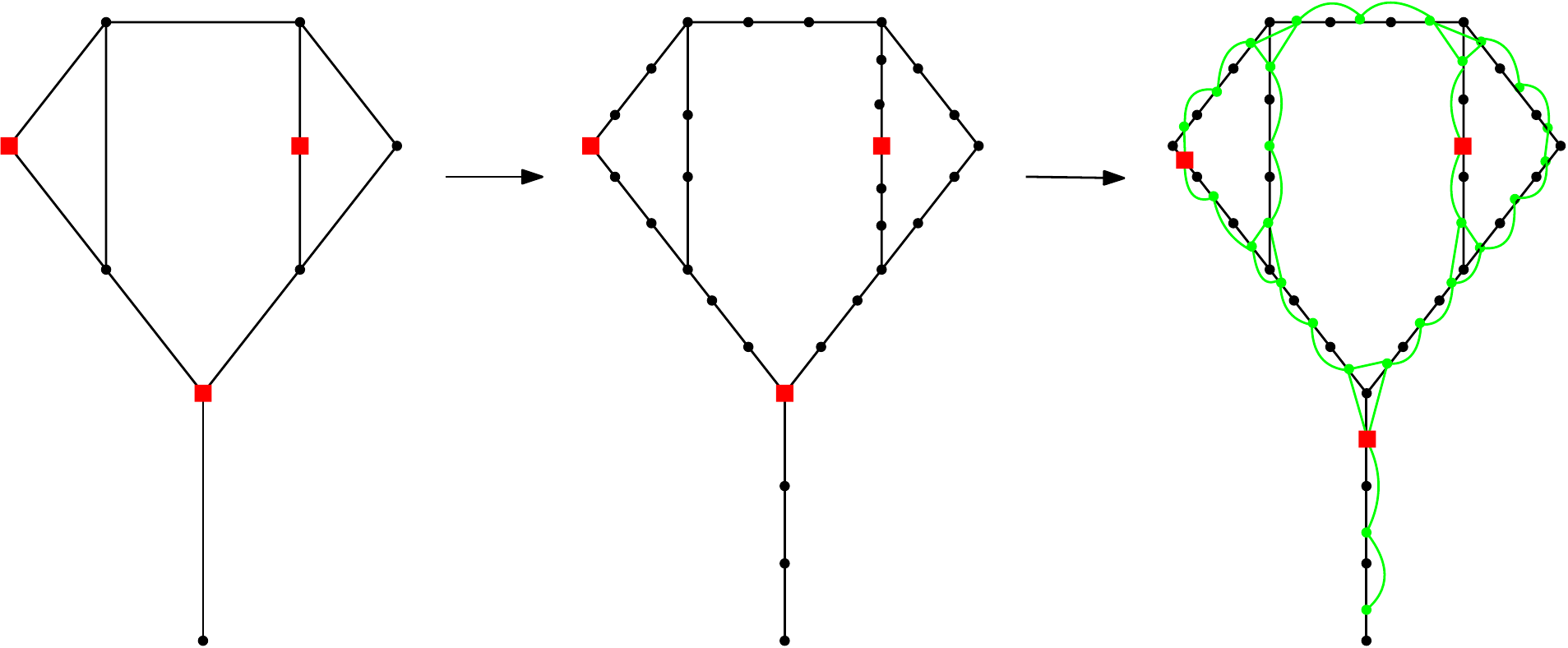}
	\caption{The figure shows the construction in Theorem~\ref{thm:NMwC:C2}. The leftmost figure is an instance of {\sc Edge Multiway Cut} on planar subcubic graphs. The figure in between shows a $2$-subdivision of the instance. The rightmost figure shows the line graph of the subdivided graphs drawn in green. In each figure, the terminals are shown as red squares.}
	\label{fig:nmwchard}
\end{figure}
\noindent
We are now ready to prove Theorem~\ref{thm:NMwC:C2}.

\NMWCthm*

\begin{proof}
	It is readily seen that {\sc Node Multiway Cut} belongs to \NP.
	In Theorem~\ref{thm:MwC:C2}, we showed that {\sc Edge Multiway Cut} is \NP-complete on the class of planar subcubic graphs. We will now reduce {\sc Node Multiway Cut} from {\sc Edge Multiway Cut} restricted to the class of planar subcubic graphs. Let $G$ be a planar subcubic graph with a set of terminals~$T$.  
	
From $G$, we create an instance of {\sc Node Multiway Cut} by the following operations; 
here, the {\it line graph} of a graph $G=(V,E)$ has $E$ as vertex set and for every pair of edges $e$ and $f$ in $G$, there is an edge between $e$ and $f$ in the line graph of $G$ if and only if $e$ and $f$ share an end-vertex.
	
	\begin{itemize}
		\item 
		We construct the $2$-subdivision of $G$, which we denote by	$G'$.
		\item Next, we construct the line graph of $G'$, which we denote by $L$. 
		\item Finally, we create the terminal set of $L$ as follows: for each terminal $t$ in $G'$, consider the edges incident on it. In the line graph $L$, these edges must form a clique, $K_i$ for $i \in \{1,2,3\}: i= \operatorname{deg}(t)$. In this clique, we pick one vertex and make it a terminal. We denote the terminal set in $L$ by $T_L$. 
	\end{itemize}

\noindent
Note that $L$ is planar, as $G'$ is planar and every vertex in $G'$ has degree at most~$3$~\cite{S64}.
Note also that $L$ is subcubic, as every edge in $G'$ has one end-vertex of degree~$2$ and the other end-vertex of degree at most $3$.
Moreover, $L$ and $T_L$ can be constructed in polynomial time. 

	\begin{restatable}{claim}{NMWCclm}\label{clm:nmwchard}
		There exists an edge multiway cut of $(G, T)$ of size at most $k$ if and only if there exists a node multiway cut of $(L, T_L)$ of size at most $k$.
	\end{restatable}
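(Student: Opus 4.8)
The plan is to exhibit a correspondence between edge multiway cuts of $(G,T)$ and node multiway cuts of $(L,T_L)$ that preserves size. Recall that $G'$ is the $2$-subdivision of $G$, so every original edge $uv$ of $G$ has become a path $u\,a_{uv}\,b_{uv}\,v$ with two internal degree-$2$ vertices, and $L$ is the line graph of $G'$. By Lemma~\ref{lem:Mwc:C3} (applied once to pass from $G$ to its $1$-subdivision, and a routine second application, or directly via a $2$-subdivision variant of the same argument), edge multiway cuts behave well under subdivision, so it suffices to relate $(G',T)$ to $(L,T_L)$; the middle edge $a_{uv}b_{uv}$ of each subdivided path will serve as the vertex of $L$ that we delete in place of cutting $uv$ in $G$. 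The key structural fact I would record first is that in $L$ the vertices coming from the three edges incident to a branch vertex $x$ of $G'$ (a vertex of degree at most $3$, image of an original vertex of $G$) form a clique $K_{\deg(x)}$, and these are the only cliques of size $\ge 3$ in $L$; moreover every vertex of $L$ lies in exactly two such cliques (one for each end-vertex of the corresponding edge of $G'$), except that edges of $G'$ incident to a degree-$1$ vertex would lie in one — but $G'$ has no degree-$1$ vertices, so every edge-vertex of $L$ sits in exactly two branch-cliques.

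For the forward direction, suppose $S \subseteq E(G)$ is an edge multiway cut of $(G,T)$ with $|S| \le k$. For each $uv \in S$ put the corresponding middle-edge vertex $m_{uv} := a_{uv}b_{uv}$ into a set $S_L \subseteq V(L)$; then $|S_L| = |S| \le k$ and $S_L \cap T_L = \emptyset$ because $T_L$ consists only of edge-vertices incident to terminals, never middle edges (terminals of $G'$ have only original incident edges, and a middle edge is not incident to a branch vertex). I claim $L - S_L$ has at most one terminal of $T_L$ per component. A path in $L$ between two vertices corresponds to a walk in $G'$ (consecutive edge-vertices share an endpoint), so a path in $L-S_L$ between two terminal-vertices $t_1^L, t_2^L$ would yield a walk in $G'$ from $t_1$ to $t_2$ avoiding every middle edge $m_{uv}$ with $uv\in S$; contracting the degree-$2$ vertices this is a walk in $G-S$ from $t_1$ to $t_2$, contradicting that $S$ is an edge multiway cut. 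Hence $S_L$ is a node multiway cut of $(L,T_L)$ of size at most $k$.

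For the converse, let $S_L \subseteq V(L)\setminus T_L$ be a node multiway cut of $(L,T_L)$ with $|S_L| \le k$. The obstacle — and the step I expect to be the crux — is that $S_L$ may delete edge-vertices of $L$ that are \emph{not} middle edges: e.g. it might delete one of the three edge-vertices in a branch-clique $K_{\deg(x)}$, which in $G$-terms corresponds to cutting an original half-edge at $x$ rather than a whole edge. I would handle this by a normalization argument: show that any minimum (or minimal) node multiway cut of $(L,T_L)$ can be assumed to use only middle-edge vertices, by a local exchange. Concretely, if $S_L$ contains an edge-vertex $e=xa_e$ of $L$ that is an \emph{outer} edge of a subdivided path (incident to a branch vertex $x$), consider the middle edge $m_e$ of that same path: deleting $m_e$ instead of $e$ disconnects at least the same set of walks through that path (since every walk using the path must traverse $m_e$), and $m_e\notin T_L$, so we may swap $e$ for $m_e$ without increasing size or creating new terminal connections; here the $2$-subdivision (two internal vertices, hence a genuine "middle" slot disjoint from both branch-cliques) is exactly what makes the swap clean. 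After this normalization, the surviving set is $\{m_{uv} : uv \in S\}$ for some $S \subseteq E(G)$ with $|S| \le |S_L| \le k$, and running the path/walk correspondence of the forward direction in reverse shows $S$ is an edge multiway cut of $(G,T)$. Combining both directions proves the claim.
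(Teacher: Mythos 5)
Your proof is correct and takes essentially the same route as the paper: both directions rest on the correspondence between edges of the $2$-subdivision $G'$ and vertices of $L$, combined with an exchange argument that pushes cut elements onto the middle edges of the subdivided paths, which is exactly how the paper keeps the cut away from the terminal cliques. The only cosmetic difference is that you normalise the node multiway cut all the way down to middle-edge vertices, whereas the paper merely replaces terminal-incident edges (respectively, terminal-clique vertices) by adjacent middle edges before translating between the two problems.
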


\begin{claimproof}
	We assume that $(G,T)$ has an edge multiway cut $S$ of size at most $k$. By Lemma~\ref{lem:Mwc:C3}, $G'$ also has an edge multiway cut of size at most $k$. We claim that there exists an edge multiway cut $S'$ of $G'$ of size at most $k$ which does not contain any edge incident on a terminal. Every edge in $G'$ is adjacent to some edge with both its ends having degree two. Therefore, if an edge in the edge multiway cut of $G'$ is incident on a terminal, we can replace it with its adjacent edge, which disconnects all the paths disconnected by the former and does not increase the size of the edge multiway cut.  Now, for each edge in $S'$ we add its corresponding vertex in $L$ to a set $S_L$. Since $S'$ pairwise disconnects the terminals in $G'$, $S_L$ disconnects all the terminal cliques from each other. Therefore, $S_L$ is a node multiway cut of~$L$.
	
	Conversely, let $S'_L \subseteq V(L) \setminus T_L$ be a node multiway cut of $(L,T_L)$ of size at most $k$. By similar arguments as above, we may assume that $S'_L$ does not contain any vertex from any terminal-clique. We claim that $G$ has an edge multiway cut of size at most $k$. To that end, we show that $G'$ has an edge multiway cut of size at most $k$ and apply Lemma~\ref{lem:Mwc:C3} to prove the same for $G$. We add to the edge multiway cut $S$ the edges of $G'$ that correspond to the vertices in $S'_L$. The size of $S$ is clearly at most $k$. To see that it is an edge multiway cut of $G'$, note that pairwise disconnecting the terminal-cliques of $L$ amounts to pairwise disconnecting the set of edges incident on any terminal in $G'$ from its counterparts. This, in turn, pairwise disconnects all the terminals in $G'$.
\end{claimproof}	
	
\noindent
By our construction and Claim~\ref{clm:nmwchard}, {\sc Node Multiway Cut} is \NP-complete on the class of planar subcubic graphs.
\end{proof}

\section{The Proof of Theorem~\ref{thm:NMwCDT:C2}}\label{s-3}

In this section we prove Theorem~\ref{thm:NMwCDT:C2}.

\NMWCDTthm*

\begin{proof}
	It is readily seen that {\sc Node Multiway Cut with Deletable Terminals} belongs to \NP. We now reduce from {\sc Vertex Cover} on planar subcubic graphs, which is known to be \NP-complete~\cite{Mo01}. Let $G$ be the graph of an instance of this problem. We keep the same graph, but set $T = V(G)$. Since any two adjacent vertices are now adjacent terminals, any vertex cover in $G$ corresponds to a node multiway cut for $(G,T)$. The result follows.
\end{proof}

\section{Conclusions}

We proved that {\sc Edge Multiway Cut} and both versions of {\sc Node Multiway Cut} are \NP-complete for planar subcubic graphs.
We also showed that these results filled complexity gaps in the literature related to maximum degree, ${\cal H}$-topological-minor-free graphs and ${\cal H}$-subgraph-free graphs. 
The last dichotomy result assumes that ${\cal H}$ is a finite set of graphs. We therefore pose the following challenging question.

\begin{open}\label{o-1}
Classify the complexity of  {\sc Edge Multiway Cut} and both versions of {\sc Node Multiway Cut} for ${\cal H}$-subgraph-free graphs when ${\cal H}$ is infinite.
\end{open}

\noindent
An answer to Open Problem~\ref{o-1} will require novel insights into the structure of ${\cal H}$-subgraph-free graphs.

\bibliographystyle{abbrv}
\bibliography{arxiv}

\begin{thebibliography}{10}

\bibitem{ALS91}
S.~Arnborg, J.~Lagergren, and D.~Seese.
\newblock Easy problems for tree-decomposable graphs.
\newblock {\em Journal of Algorithms}, 12:308--340, 1991.

\bibitem{BergougnouxPT22}
B.~Bergougnoux, C.~Papadopoulos, and J.~A. Telle.
\newblock Node multiway cut and subset feedback vertex set on graphs of bounded
  mim-width.
\newblock {\em Algorithmica}, 84:1385--1417, 2022.

\bibitem{BirchfieldT99}
S.~Birchfield and C.~Tomasi.
\newblock Multiway cut for stereo and motion with slanted surfaces.
\newblock In {\em Proc. ICCV 1999}, pages 489--495. {IEEE} Computer Society,
  1999.

\bibitem{BBJKLMOPS23}
H.~L. Bodlaender, {\'{E}}.~Bonnet, L.~Jaffke, D.~Knop, P.~T. Lima, M.~Milanic,
  S.~Ordyniak, S.~Pandey, and O.~Such{\'{y}}.
\newblock Treewidth is {NP}-complete on cubic graphs.
\newblock In {\em Proc. IPEC 2023}, volume 285 of {\em LIPIcs}, pages
  7:1--7:13, 2023.

\bibitem{BCD23}
{\'{E}}.~Bonnet, D.~Chakraborty, and J.~Duron.
\newblock Cutting {B}arnette graphs perfectly is hard.
\newblock In {\em Proc. WG 2023}, volume 14093 of {\em LNCS}, pages 116--129.
  Springer, 2023.

\bibitem{BoykovVZ98}
Y.~Boykov, O.~Veksler, and R.~Zabih.
\newblock Markov random fields with efficient approximations.
\newblock In {\em Proc. CVPR 1998}, pages 648--655. {IEEE} Computer Society,
  1998.

\bibitem{CF01}
G.~C{\u{a}}linescu and C.~G. Fernandes.
\newblock Multicuts in unweighted digraphs with bounded degree and bounded
  tree-width.
\newblock {\em Electronic Notes in Discrete Mathematics}, 7:194--197, 2001.

\bibitem{CKR00}
G.~C{\u{a}}linescu, H.~J. Karloff, and Y.~Rabani.
\newblock An improved approximation algorithm for {M}ultiway cut.
\newblock {\em Journal of Computer and System Sciences}, 60:564--574, 2000.

\bibitem{CCF14}
Y.~Cao, J.~Chen, and J.~Fan.
\newblock An ${O}^*(1.84^k)$ parameterized algorithm for the {M}ultiterminal
  {C}ut problem.
\newblock {\em Information Processing Letters}, 114:167--173, 2014.

\bibitem{CLL09}
J.~Chen, Y.~Liu, and S.~Lu.
\newblock An improved parameterized algorithm for the {M}inimum {N}ode
  {M}ultiway {C}ut problem.
\newblock {\em Algorithmica}, 55:1--13, 2009.

\bibitem{CHM13}
R.~Chitnis, M.~Hajiaghayi, and D.~Marx.
\newblock Fixed-parameter tractability of {D}irected {M}ultiway {C}ut
  parameterized by the size of the cutset.
\newblock {\em SIAM Journal on Computing}, 42:1674--1696, 2013.

\bibitem{CPPW13}
M.~Cygan, M.~Pilipczuk, M.~Pilipczuk, and J.~O. Wojtaszczyk.
\newblock On {M}ultiway {C}ut parameterized above lower bounds.
\newblock {\em ACM Transactions on Computation Theory}, 5:3:1--3:11, 2013.

\bibitem{DJPSY94}
E.~Dahlhaus, D.~S. Johnson, C.~H. Papadimitriou, P.~D. Seymour, and
  M.~Yannakakis.
\newblock The complexity of multiterminal cuts.
\newblock {\em SIAM Journal on Computing}, 23:864--894, 1994.

\bibitem{FF56}
L.~R. Ford and D.~R. Fulkerson.
\newblock Maximal flow through a network.
\newblock {\em Canadian Journal of Mathematics}, 8:399–404, 1956.

\bibitem{GalbyMSST22}
E.~Galby, D.~Marx, P.~Schepper, R.~Sharma, and P.~Tale.
\newblock Domination and cut problems on chordal graphs with bounded leafage.
\newblock In {\em Proc. IPEC 2022}, volume 249 of {\em LIPIcs}, pages
  14:1--14:24, 2022.

\bibitem{GargVY04}
N.~Garg, V.~V. Vazirani, and M.~Yannakakis.
\newblock Multiway cuts in node weighted graphs.
\newblock {\em Journal of Algorithms}, 50:49--61, 2004.

\bibitem{Ha98}
D.~Hartvigsen.
\newblock The {P}lanar {M}ultiterminal {C}ut problem.
\newblock {\em Discrete Applied Mathematics}, 85:203--222, 1998.

\bibitem{JMOPPSV}
M.~Johnson, B.~Martin, J.~J. Oostveen, S.~Pandey, S.~Smith, and E.~J. van
  Leeuwen.
\newblock Complexity framework for forbidden subgraphs {I}: {T}he framework.
\newblock {\em arXiv:2211.12887 [math.CO]}, 2022.

\bibitem{MPPSL24}
M.~Johnson, B.~Martin, S.~Pandey, D.~Paulusma, S.~Smith, and E.~J. van Leeuwen.
\newblock Edge multiway cut and node multiway cut are hard for planar subcubic
  graphs.
\newblock In {\em Proc. {SWAT} 2023}, volume 294 of {\em LIPIcs}, pages
  29:1--29:17, 2024.

\bibitem{KM12}
P.~N. Klein and D.~Marx.
\newblock Solving {P}lanar $k$-{T}erminal {C}ut in ${O}(n^{c \sqrt{k}})$ time.
\newblock In {\em Proc. ICALP 2012}, volume 7391 of {\em LNCS}, pages 569--580.
  Springer, 2012.

\bibitem{Ma12}
D.~Marx.
\newblock A tight lower bound for {P}lanar {M}ultiway {C}ut with fixed number
  of terminals.
\newblock In {\em Proc. ICALP 2012}, volume 7391 of {\em LNCS}, pages 677--688.
  Springer, 2012.

\bibitem{Mo01}
B.~Mohar.
\newblock Face covers and the genus problem for apex graphs.
\newblock {\em Journal of Combinatorial Theory, Series {B}}, 82:102--117, 2001.

\bibitem{PV22}
S.~Pandey and E.~J. van Leeuwen.
\newblock Planar {M}ultiway {C}ut with terminals on few faces.
\newblock In {\em Proc. SODA 2022}, pages 2032--2063. SIAM, 2022.

\bibitem{PapadopoulosT20}
C.~Papadopoulos and S.~Tzimas.
\newblock Subset feedback vertex set on graphs of bounded independent set size.
\newblock {\em Theoretical Computer Science}, 814:177--188, 2020.

\bibitem{Re83}
J.~H. Reif.
\newblock Minimum $s$-$t$ cut of a planar undirected network in ${O}(n\cdot
  {\log}^{2 }(n))$ time.
\newblock {\em SIAM Journal on Computing}, 12:71--81, 1983.

\bibitem{RS86}
N.~Robertson and P.~D. Seymour.
\newblock Graph minors.~{V}.~{E}xcluding a planar graph.
\newblock {\em Journal of Combinatorial Theory, Series {B}}, 41:92--114, 1986.

\bibitem{S64}
J.~Sedla{\'a}{\u c}ek.
\newblock Some properties of interchange graphs.
\newblock In {\em Theory of Graphs and Its Applications}, pages 145--150.
  Academic Press, 1964.

\bibitem{Stone77}
H.~Stone.
\newblock Multiprocessor scheduling with the aid of network flow algorithms.
\newblock {\em IEEE Transactions on Software Engineering}, SE-3(1):85--93,
  1977.

\bibitem{MP22}
M.~\v{S}koviera and P.~Var\v{s}a.
\newblock {NP}-completeness of perfect matching index of cubic graphs.
\newblock In {\em Proc. STACS 2022}, volume 219 of {\em LIPIcs}, pages
  56:1--56:12, 2022.

\bibitem{ZZ20}
R.~Ziemann and P.~Zylinski.
\newblock Vertex-edge domination in cubic graphs.
\newblock {\em Discrete Mathematics}, 343:112075, 2020.

\end{thebibliography}

\appendix

\section{The Proof of Lemma~\ref{lem:Mwc:C3}}

Here is the proof of Lemma~\ref{lem:Mwc:C3}, which is from Johnson et al.~\cite{JMOPPSV}, but which we include below for convenience.

\EMWClem*

\begin{proof}
Let $G$ belong to $\mathcal{H}$ and $T$ a set of terminals in $G$.	Let $G'$ be the graph $G$ after subdividing each edge. For each edge $e$ in $G$, there exist two edges in $G'$. If an edge of $G$ is in an edge multiway cut for $(G,T)$, then it suffices to replace it by only one of the two edges created from it in $G'$ to disconnect the path $e$ lies on. This yields an edge multiway cut for $(G',T)$ of the same size. Conversely, if an edge of $G'$ is in an edge multiway cut for $(G',T)$, then we replace it by the corresponding original edge of $G$. This yields an edge multiway cut for $(G,T)$ of the same size. Hence, $(G,T)$ has an edge multiway cut of size at most $k$ if and only if $(G',T)$ has an edge multiway cut of size~$k$. 
\end{proof}

\end{document}